\let\oldsqrt\sqrt
\def\sqrt{\mathpalette\DHLhksqrt} \def\DHLhksqrt#1#2{%
\setbox0=\hbox{$#1\oldsqrt{#2\,}$}\dimen0=\ht0
\advance\dimen0-0.2\ht0
\setbox2=\hbox{\vrule height\ht0 depth -\dimen0}%
{\box0\lower0.4pt\box2}}
\def\b0{\boldsymbol{0}}
\newcommand{\R}     {\mathbb{R}} 
\newcommand{\N}     {\mathbb{N}} 
\renewcommand{\P}   {\mathbb{P}}
\newcommand{\Ecal}   {{\mathcal E }}
\newcommand{\Ical}   {{\mathcal I }}
\newcommand{\Exp}{\mathscr{E}\kern-0.2mm{\operatorname{xp}}}
\newcommand{\Log}{\mathscr{L}\kern-0.2mm{\operatorname{og}}}
\def\1{{\mathchoice {1\mskip-4mu\mathrm l}      
{1\mskip-4mu\mathrm l}
{1\mskip-4.5mu\mathrm l} {1\mskip-5mu\mathrm l}}}
\newtheoremstyle{plain}
  {6pt}
  {4pt}
  {\slshape}
  {}
  {\bfseries}
  {.}
  {0.5em}
  {}%
\newtheorem{thm}{\protect\theoremname}
  \newtheorem{prop}[thm]{\protect\propositionname}
  \newtheorem{lem}[thm]{\protect\lemmaname}
  \numberwithin{thm}{section}
  \providecommand{\corollaryname}{Corollary}
  \providecommand{\definitionname}{Definition}
  \providecommand{\factname}{Fact}
  \providecommand{\propositionname}{Proposition}
  \providecommand{\remarkname}{Remark}
\providecommand{\theoremname}{Theorem}
\providecommand{\lemmaname}{Lemma}
\newcommand*{\defeq}{\mathrel{\rlap{%
                     \raisebox{0.3ex}{$\m@th\cdot$}}%
                     \raisebox{-0.3ex}{$\m@th\cdot$}}%
                     =}
\newcommand{\julian}[1]{{\color{red} \sf $\spadesuit$ Julian: [#1]}}
\newcommand{\linknmb}{M}
\newcommand{\colnmb}{\theta}
\theoremstyle{definition}
\theoremstyle{plain}
\newtheorem{lemma}{Lemma}[section]
\newtheorem{corollary}{Corollary}[section]
\theoremstyle{remark}
\title{Improved bounds for connection probabilities in random loop models}
\author{Volker Betz, Andreas Klippel, Julian Nauth}
\date{\today}
\begin{document}

\maketitle

\begin{abstract}
We revisit and extend results by Ueltschi \cite{Ueltschi_2013} on the application of reflection positivity to loop models with $\theta \in \mathbb{N}_{\geq 2}$. By exploiting additional flexibility in the method, we prove the existence of long loops over a broader range of parameters $u$ and $\theta$, and establish new lower bounds for connection probabilities and the critical parameter~$\beta_c$. Our results are compared with recent numerical simulations, providing further insight into the phase diagram of quantum spin systems.
\end{abstract}
\section{Introduction}
In the last decades, many physical models have been investigated using probabilistic graphical representations. A class of such models are random loop models, which represent certain quantum spin systems.  For the ferromagnetic quantum Heisenberg model, such a loop
 model was introduced by Powers \cite{powers_heisenberg} and then used by T\'oth \cite{toth_pressure} to give a lower bound on the pressure. A different random loop model serves as a representation of the anti-ferromagnetic Heisenberg model, as found by Aizenman and Nachtergaele \cite{aizenman_nachtergaele}. Ueltschi \cite{Ueltschi_2013} observed that these are two instances of a one-parameter family of loop models representing XXZ-models. For the definition of this family we refer to section 2.
 
This family of loop models is built using two independent Poisson processes on the space \( E \times [0,1)_{\text{per}} \), where $E$ are nearest neighbor edges. The first process, with intensity \( u\beta \), generates crosses, while the second, with intensity \( (1 - u) \beta \), generates double bars.
Starting from a vertex \( x \) at time \( t \), we construct loops by moving upwards along the time axis until encountering either a cross or a double bar. Upon reaching a cross or double bar, we jump to the nearest right neighbor. If we encounter a double bar, we reverse direction along the time axis, whereas if we encounter a cross, we also change direction. We continue this process until we return to the starting point, forming a closed trajectory in \( V \times [0,1)_{\text{per}} \). The connection to spin-\(S\) quantum systems   requires an additional parameter $\theta$, which reweights the measure by a factor of \(\theta^{|\mathcal{L}|}\) with \(\theta = 2S + 1\). 
This introduces an additional dependency into the model, making its analysis particularly challenging.
The main question of interest is whether a phase transition occurs or not. 
For graphs with a sufficiently high vertex degree, one expects a phase transition in the sense that there exists a critical (loop) parameter \(\beta_c > 0\) such that:
\begin{enumerate}[(a)]
    \item for \(\beta < \beta_c\), only finite loops occur, and
    \item for \(\beta > \beta_c\), infinite loops appear almost surely.
\end{enumerate}
A necessary condition for (b) is the existence of an infinite percolation cluster induced by the Poisson processes. 
A remarkable result by Schramm \cite{schramm_compositions_2005} shows in the case $u=1$ and $\theta=1$ that on the complete graph \(K_n\), where edge rates are \(\beta/n\), the existence of infinite loop and the emergence of an infinite percolation cluster are asymptotically equivalent as \(n \to \infty\), implying that their critical parameters coincide. Moreover, it is shown that the normalised ordered sizes of the loops in the giant component exhibit a Poisson-Dirichlet \(\operatorname{PD}(1)\) structure.  

For graphs with uniformly bounded degree, it was shown in \cite{muehlbacher_critical_2021} that the existence of infinite percolation is not a sufficient condition. In this setting, there exists a gap between critical values, for which a lower bound is given in \cite{betz2024looppercolationversuslink}. A similar phenomenon was observed for a wide class of trees in \cite{klippel2025loopvsbernoullipercolation}.
In general to show (b) is a challenging problem. 
Nonetheless, there are some cases which are well understood.

On regular trees, Angel \cite{angel_permutations} established the existence of two distinct phases for $\theta=1$ on $d$-regular trees with $d\geq 4$ and showed that infinite loops appear for $\beta \in (d^{-1}+\tfrac{13}{6}d^{-2},\log(3))$ when $d$ is sufficiently large. Hammond \cite{hammond_infinitecycles} later proved that for $d\geq 2$, there exists a critical value $\beta_0$ above which infinite loops emerge in the case $u=1$, a result extended by Hammond and Hegde \cite{hammond_infiniteloops} to all $u\in[0,1]$. Bj\"ornberg and Ueltschi provided an asymptotic expression for the critical parameter for loops up to second order in $d^{-1}$ \cite{bjornberg_ueltschi_trees1} and further extended this analysis to the case $\theta \neq 1$ \cite{bjornberg_ueltschi_treestheta}. A comprehensive picture for $d$-regular trees with $d\geq 3$, $\theta=1$, and $u\in[0,1]$ was established by Betz et al.\ in \cite{betz_sharptrees}, proving a (locally) sharp phase transition for infinite loops in the interval $(0,d^{-1/2})$ and deriving explicit bounds on the critical parameter up to order 5 in $d^{-1}$.  

A first result for (b) on $\mathbb{Z}^d$ for $d\geq 5$ in the case of $\theta=1$ and $u=1$ was given in \cite{elboim2024infinitecyclesinterchangeprocess}.
For \(\theta \neq 1\) and $\theta \in \mathbb{N}$, there are results proving the existence of a phase transition in dimensions three and higher using the method of reflection positivity introduced by Fr\"ohlich, Simon, and Spencer for the classical Heisenberg model in \cite{frohlich1976infrared}, and introduced to the quantum case by Dyson, Lieb, and Simon \cite{dyson1978phase}. The latter corresponds to the quantum Heisenberg antiferromagnet, which is equivalent to the loop model with \(\theta = 2\) and \(u = 0\). Ueltschi managed to rigorously adapt the method to the random loop model in \cite{Ueltschi_2013} and demonstrated that it remains applicable for various values of $\theta\in \mathbb{N}_{\geq 2}$ and $u\in [0,1/2]$.

The purpose of this paper is to revisit end extend the results of \cite{Ueltschi_2013}. We use some additional flexibility that is inherent in the method, but not exploited in \cite{Ueltschi_2013}. This allows to prove long loops for a larger range of values  $\theta$ and $u$, and to obtain lower bounds on nearest-neighbor, finite-range, and long-range connections probabilities. We also give lower bounds for the critical value $\beta_c$ and compare them with numerical studies from \cite{barp2015numerical}. 

\paragraph{Outline of the remaining sections.}In Section 2, we introduce the model and some of its basic properties. Moreover, we state our main theorem and provide its proof. In Section 3, we present all the consequences of our theorem and prove them.

\section{Model and results}
The random loop model is well-established and precisely defined in various papers, such as in \cite{Ueltschi_2013} and \cite{betz_sharptrees}. However, we provide here a brief overview of the main objects.

The  model can be defined on any graph \( G \), but in this paper, we only consider the \( d \)-dimensional torus $\mathcal{G}_L=(\Lambda_L,\mathcal{E}_L) $ with side length \( L \), where \( \Lambda_L \coloneqq \mathbb{Z}^d / L \mathbb{Z}^d \) is the set of vertices, and the edge set $\mathcal{E}_L$ consists of nearest neighbors, with periodic boundary conditions.

Let \( u \in [0,1] \) and $\beta > 0$. We attach the interval $[0,\beta)$ to each edge of the graph, and will interpret $t \in [0,\beta)$ as a time parameter. On the space \( \mathcal{E}_L \times [0, \beta) \), we define two independent Poisson point processes: one with intensity \( u \) that produces so-called crosses, and another with intensity \( 1 - u \) that produces double bars. A link is either a cross or a double bar. The combined configuration space is denoted by \( \Omega(G)\), with probability measure \( \rho_{G,\beta,u} \). For a realization \( \omega \in \Omega(G) \), we denote by \( \linknmb \) the number of links.

The crosses and double bars form loops, defined by the following rules:

\begin{itemize}
    \item We start at a point \( (x,t)\in V\times[0,\beta) \) and move in the positive time direction, thereby exploring links on edges adjacent to $x$.
    \item If we reach time $\tau$ and encounter a double bar at $(\{x,y\}, \tau-)$, we jump to $(y,\tau)$ and change time direction.
    \item If we reach time $\tau$ and encounter a cross at $(\{x,y\}, \tau-)$, we jump to $(y,\tau)$ and continue in the same time direction.
    \item If, for any vertex $z$, we reach $(z,\beta)$ whilst moving in the positive time direction then we jump to $(z,0)$ and continue moving in the positive time direction. If we reach $(z,0)$ whilst moving in the negative time direction then we jump to $(z,\beta)$ and continue moving in the negative time direction.
    \item If we reach $(x,t)$ again, the loop is complete.
\end{itemize}

We observe that loops are closed trajectories in the space \( \Lambda_L\times [0, \beta) \).

We denote the set of loops for a given configuration \( \omega \) by \( \mathcal{L}(\omega) \) and fix \( \theta \in \mathbb{N}_{>0} \). We then color these loops uniformly with color $i \in [\theta]$ and reweight the measure by the number of colors. The loop-weighted measure is defined by
\begin{equation*}\label{eq:prob_meas_rlm}
    \mathbb{P}_{G,\beta,u,\theta}(\omega) \coloneqq \frac{1}{\widetilde{Z}} \theta^{|\mathcal{L}(\omega)|} \rho_{G,\beta,u},
\end{equation*}
where \( \widetilde{Z} \) is the partition function of the random loop model, given by
\begin{equation*}
    \widetilde{Z} \coloneqq \int_{\Omega(G)} \rho_{G,\beta,u} \, \theta^{|\mathcal{L}(\omega)|} \, \mathrm{d}\omega.
\end{equation*}
We denote the measure where the underlying graph is the torus with side-length $L$ by
$$\mathbb{P}_{L,\beta,u,\theta}.$$
Note that this model is typically defined more generally for \( \theta \in \mathbb{R}_{>0} \). Since we are only interested in the probabilistic interpretation, we restrict ourselves to natural numbers and set the color set as \( [\theta] = \{1, \ldots, \theta\} \). For an introduction that includes \( \theta \in \mathbb{R}_{>0} \), we recommend \cite{Ueltschi_2013}.

Based on this framework, we define the concept of connection within the model. Two points \( (x,0) \) and \( (y,t) \) are said to be connected through time $t$, denoted by $E_{x,y,t}$, if there exists a loop connecting them through time $t$ that adheres to the rules outlined above. This connection indicates that the two points belong to the same closed trajectory within the random loop configuration.
Furthermore, we denote the infinite volume measure by 
\[
\P_{\beta,u,\theta} \coloneqq \lim_{L \to \infty} \P_{L,\beta,u,\theta}.
\]
The existence of these limits has been discussed in \cite{froelich_vol}. 

For the statement of Theorem \ref{t:main} below, let 
$c=(c_\ell)_{0\leq \ell \leq m}\in\R^{m+1}$ with $m\in\N$, $u \in [0,1]$ and $d \in \N$. We define 
\begin{align} \label{eq: curly I}
\Ical^{u,d}_{c}(\alpha)\coloneqq&
\int_{[0,2\pi]^d}\frac{\mathrm{d}^dk}{(2\pi)^d}\, \sqrt{u\alpha + (1-u)(1-\alpha)\frac{\epsilon(k+\pi)}{\epsilon(k)}} \bigg( \sum_{\ell=0}^{m} \sum_{j=1}^d \frac{c_\ell}{d} \cos( \ell k_j) \bigg)_+
\,,
\end{align}
with $k+\pi\coloneqq(k_j+\pi)_{j\in\{1,\ldots,d\}}$, 
\[
\epsilon(k)\coloneqq 2\sum_{j=1}^d \big(1-\cos (k_j)\big)
\]
for $k\in[0,2\pi]^d$, and $x_+\coloneqq \max\{0,x\}$ for  $x\in\mathbb{R}$. In addition, let 
\begin{equation}\label{eq:Idc_def}
I^{u,d}_{c} \coloneqq \sup_{\alpha\in[0,1]} \Ical^{u,d}_{c}(\alpha) \quad \text{and} \quad 
\tilde I^d_c \coloneqq
\int_{[0,2\pi]^d}\frac{\mathrm{d}^dk}{(2\pi)^d}\, \frac{1}{\epsilon(k)} \bigg( \sum_{\ell=0}^{m} \sum_{j=1}^d \frac{c_\ell}{d} \cos (\ell k_j) \bigg)_+.
\end{equation}

Note that $\tilde I_c^d < \infty$ if and only if  
either $\sum_{\ell=0}^{m} c_\ell = 0$ or $d \geq 3$.

Our main result reads as follows:
\begin{thm}\label{t:main}
Let $d\geq 1$, $\theta\geq 2$, $u\in[0,1/2]$, $\beta>0$ and $e\in\Lambda_L$ be a unit vector in the lattice geometry. Let further $m\in\N$ and $c = (c_0,\dots,c_{m-1},c_{m}) \in\mathbb{R}^{m+1}$. Then,

\begin{align*}
\bigg(\sum_{\ell=0}^{m} c_\ell\bigg) \lim_{|\Lambda|\rightarrow\infty} \bigg( \frac{1}{|\Lambda|} \sum_{x\in\Lambda} \P_{\beta,u,\theta}(E_{0,x,0}) \bigg)
&\geq
\sum_{\ell=0}^{m} c_\ell \P_{\beta,\theta,u}\big(E_{0, \ell e,0}\big) \\
&\quad - \colnmb\sqrt{\frac{\P_{\beta,\theta,u}\big(E_{0,e,0}\big)}{2}} I^{u,d}_{c}
- \frac{\colnmb}{2\beta} \tilde I^d_c
\,.
\end{align*}

\end{thm}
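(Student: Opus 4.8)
\emph{Strategy.} The plan is to run the reflection-positivity scheme of \cite{Ueltschi_2013} for this loop model, but to keep alive two degrees of freedom that are held fixed there: the test vector $c$, encoded by the even trigonometric polynomial
\[
P(k)\ :=\ \sum_{\ell=0}^{m}\ \sum_{j=1}^d\ \frac{c_\ell}{d}\,\cos(\ell k_j)
\]
--- the very object appearing inside the positive part in the definitions of $\Ical^{u,d}_c(\alpha)$ and $\tilde I^d_c$ --- and an interpolation parameter $\alpha\in[0,1]$ between a ``uniform'' and a ``staggered'' reflection. Throughout, write $g_L(x):=\P_{\beta,u,\theta}$-in-finite-volume, i.e.\ $g_L(x):=\P_{L,\beta,u,\theta}(E_{0,x,0})$ on $\mathcal G_L$.

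\emph{Fourier analysis and a sum rule.} By translation invariance and the symmetries of the torus (coordinate reflections and permutations), $g_L$ is real, even, has $g_L(0)=1$, and satisfies $g_L(\ell e)=g_L(\ell e_j)$ for every coordinate direction $j$. Setting $\hat g_L(k)=\sum_{x\in\Lambda_L}g_L(x)e^{-\mathrm i k\cdot x}$ on the dual torus, evaluating the inversion formula at $x=\ell e_j$, averaging over $j$, and using that $\hat g_L$ is real and even gives $\sum_\ell c_\ell\,g_L(\ell e)=\frac1{|\Lambda_L|}\sum_k\hat g_L(k)\,P(k)$. Since $P(0)=\sum_\ell c_\ell$ while $\hat g_L(0)/|\Lambda_L|=\frac1{|\Lambda_L|}\sum_{x\in\Lambda_L}\P_{L,\beta,u,\theta}(E_{0,x,0})$, peeling off the zero mode yields
\[
\Bigl(\sum_{\ell=0}^{m}c_\ell\Bigr)\,\frac1{|\Lambda_L|}\sum_{x\in\Lambda_L}\P_{L,\beta,u,\theta}(E_{0,x,0})
\ =\ \sum_{\ell=0}^{m}c_\ell\,g_L(\ell e)\ -\ \frac1{|\Lambda_L|}\sum_{k\ne 0}\hat g_L(k)\,P(k),
\]
so the theorem reduces to an upper bound on the last sum. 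For a fixed link configuration the kernel $(x,y)\mapsto\1\bigl((x,0)\text{ and }(y,0)\text{ lie on a common loop}\bigr)$ is block-diagonal with all-ones blocks --- the blocks being the vertex sets through which the various loops pass at time $0$ --- hence positive semidefinite; averaging, $g_L$ is of positive type and $\hat g_L(k)\ge 0$ for all $k$. Splitting $P=P_+-P_-$ with $P_\pm\ge 0$, one may therefore drop $P_-$ and replace $P(k)$ by $P(k)_+$ in the sum, which is where the truncations in $\Ical^{u,d}_c$ and $\tilde I^d_c$ originate.

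\emph{The infrared bound.} The heart of the proof is to establish, via reflection positivity, that there is an $\alpha\in[0,1]$ (depending on $\beta,u,\theta$ but not on $k$) such that
\[
\hat g_L(k)\ \le\ \theta\,\sqrt{\frac{\P_{L,\beta,u,\theta}(E_{0,e,0})}{2}}\ \sqrt{\,u\alpha+(1-u)(1-\alpha)\,\frac{\epsilon(k+\pi)}{\epsilon(k)}\,}\ +\ \frac{\theta}{2\beta\,\epsilon(k)}\qquad(k\ne 0).
\]
I would reflect the space--time slab $\mathcal G_L\times[0,\beta)_{\mathrm{per}}$ through hyperplanes orthogonal to a coordinate direction, exactly as for the quantum Heisenberg antiferromagnet in \cite{dyson1978phase,Ueltschi_2013}: the crosses (intensity $u$) and the double bars (intensity $1-u$) contribute two reflection-positive quadratic forms, a ``ferromagnetic'' one governed by $\epsilon(k)$ and a ``staggered'' one governed by $\epsilon(k+\pi)$, and $\alpha$ records how an auxiliary field is distributed between them before Gaussian domination is invoked; the hypothesis $u\le\tfrac12$ is precisely what keeps the relevant form reflection-positive. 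Chessboard estimates together with one Cauchy--Schwarz step then produce the displayed bound, with the $k=0$ value of the relevant (Duhamel) two-point function identified with the per-edge link density --- a multiple of $\P_{L,\beta,u,\theta}(E_{0,e,0})$ --- and with the $\epsilon(k)^{-1}$-behaviour of the Gaussian-domination bound producing the second, $\beta^{-1}$ term. Because the argument only singles out one, a priori unknown, admissible $\alpha$, nothing is lost by bounding $\Ical^{u,d}_c(\alpha)\le I^{u,d}_c=\sup_{\alpha'\in[0,1]}\Ical^{u,d}_c(\alpha')$ at the end.

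\emph{Passage to the limit and conclusion.} Inserting the infrared bound into the sum rule, recalling $g_L(\ell e)=\P_{L,\beta,u,\theta}(E_{0,\ell e,0})$, and letting $L\to\infty$: the single removed mode carries no mass in the limit, the Riemann sum $\frac1{|\Lambda_L|}\sum_{k\ne 0}(\cdots)$ converges to $\int_{[0,2\pi]^d}\frac{\mathrm{d}^dk}{(2\pi)^d}(\cdots)$ (finiteness of the limiting integrals is exactly the stated condition on $c$ and $d$; when it fails, the right-hand side of the theorem is $-\infty$ and there is nothing to prove), $\P_{L,\beta,u,\theta}(E_{0,\cdot,0})\to\P_{\beta,u,\theta}(E_{0,\cdot,0})$, and $\lim_L\frac1{|\Lambda_L|}\sum_{x\in\Lambda_L}\P_{L,\beta,u,\theta}(E_{0,x,0})$ agrees with $\lim_{|\Lambda|\to\infty}\frac1{|\Lambda|}\sum_{x\in\Lambda}\P_{\beta,u,\theta}(E_{0,x,0})$ by \cite{froelich_vol}. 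The right-hand side becomes $\sum_\ell c_\ell\,\P_{\beta,u,\theta}(E_{0,\ell e,0})-\theta\sqrt{\P_{\beta,u,\theta}(E_{0,e,0})/2}\;\Ical^{u,d}_c(\alpha)-\frac{\theta}{2\beta}\tilde I^d_c$, and bounding $\Ical^{u,d}_c(\alpha)\le I^{u,d}_c$ gives the claim. The main obstacle is the infrared bound: carrying out the reflection positivity so that the two energy scales $\epsilon(k)$ and $\epsilon(k+\pi)$ appear correctly weighted by $u$, $1-u$ and $\alpha$, and so that the prefactor is the genuine observable $\P_{L,\beta,u,\theta}(E_{0,e,0})$ rather than a crude constant --- this is precisely the extra flexibility being claimed over \cite{Ueltschi_2013}. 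The remaining ingredients (the Parseval identity, positivity of $\hat g_L$, and the infinite-volume limit) are routine.
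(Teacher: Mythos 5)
Your proposal is correct and follows essentially the same route as the paper: the Fourier sum rule with a general coefficient vector $c$, dropping the negative part of the trigonometric polynomial using $\hat\kappa(k,0)\geq 0$, inserting the infrared-type bound of Ueltschi (his inequality (5.54), which is exactly your displayed bound with the unknown $\alpha$ coming from the conditional probability in his (5.53)), estimating $\alpha$ by the supremum, and passing to the infinite-volume limit. The one genuinely different ingredient is your proof of positivity of $\hat\kappa(k,0)$: you argue directly that, for each fixed link configuration, the same-loop-at-time-zero kernel is block all-ones and hence positive semidefinite, so its average is of positive type. This is valid and in fact more elementary than the paper's argument, which goes through the quantum spin representation in the appendix (nonnegativity of the Duhamel two-point function combined with the first inequality in Ueltschi's (3.24) and the identification $\langle S^3_x S^3_{x'}\rangle=\tfrac13 S(S+1)\kappa(x'-x,0)$); your route avoids the operator formalism entirely for this step. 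One caveat: your sketch of how the infrared bound arises is loose in places --- the prefactor $\P_{L,\beta,u,\theta}(E_{0,e,0})$ comes from the nearest-neighbour correlation appearing in the double-commutator (Falk--Bruch) term rather than from the $k=0$ Duhamel function or a ``per-edge link density'', and $\alpha$ is a conditional probability, not a weight distributing an auxiliary field --- but since you state the bound itself correctly and defer its derivation to \cite{Ueltschi_2013}, exactly as the paper does, this does not constitute a gap.
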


The result of \cite{Ueltschi_2013} covers the special cases $c = (1,0)$ and $c = (0,1)$ above. In these cases,
the supremum involved in the definition of $I_{c}^{u,d}$ is taken at $\alpha=1$. We will discuss this below, 
but first we give a proof of our theorem that starts from the relevant estimates given in \cite{Ueltschi_2013} and 
only adds those parts that need to be added and modified. 

\begin{proof}[Proof of Theorem \ref{t:main}]

We work with the Fourier transform of the connection probabilities. Let us therefore introduce the notation 

$$\kappa(x,t)=\P_{\beta,u,\theta}(E_{0,x,t}).$$
Then the Fourier transform is given by
\begin{align*}
&\hat{\kappa}(k,t) \coloneqq \sum_{x\in\Lambda} e^{-ik\cdot x}  \kappa(x,t), \quad
\tilde{\kappa}(x,\tau)\coloneqq \int_0^\beta\text{d}t\, e^{-i\tau t}  \kappa(x,t), \\
&\hat{\tilde{\kappa}}(k,\tau)\coloneqq \sum_{x\in\Lambda} \int_0^\beta\text{d}t\, e^{-i\tau t-ik\cdot x}  \kappa(x,t),
\end{align*}

where $k$ belongs to the set 
$$\Lambda^*=\bigg\{k\in \tfrac{2\pi}{L}\mathbb{Z}^d: -\pi<k_i\leq\pi,\, i=1,\dots,d\bigg\}, $$
and $\tau \in \tfrac{2\pi}{\beta}\mathbb{Z}^d$. 
Since $ \kappa(x,t)$ is symmetric w.r.t. $x$ and $t$, the complex oscillation $e^{-i.}$ can equally be replaced by a cosine.\\
Let $k\in \Lambda^*\setminus \{0\}$. We will use the following two inequalities from 
\cite{Ueltschi_2013}:
The first inequality in \cite[equation (3.24)]{Ueltschi_2013} yields $\hat{\kappa}(k,0)\geq0$. The necessary argument needs quantum spin notation and follows the one in \cite{Ueltschi_2013} rather closely, which is why we don't give it here. For the convenience of the reader, it is detailed in the appendix. Inequality (5.54) of the same reference states that for $u\in [0,\tfrac{1}{2}]$ and $\theta=2,3,\dots$, we have

\begin{equation}\label{Falkbruch+} \hat{\kappa}(k,0)\leq \colnmb\sqrt{\tfrac{\P_{\beta,u,\theta}(E_{0,e,0})}{2\epsilon(k)}}\sqrt{ u\epsilon(k)\alpha' + (1-u)(1-\alpha')\epsilon(k+\pi)} + \frac{1}{\beta} \frac{\colnmb}{2\epsilon(k)} , \end{equation}

   where $\alpha' \in [0,1]$ is obtained from a conditional probability introduced in \cite[equality (5.53)]{Ueltschi_2013}. The value of $\alpha'$ is not known, which is why in the statement of Theorem 
   \ref{t:main} the supremum over $\alpha$ appears.

By definition
\begin{align*}
|\Lambda| \kappa(\ell e_j,0)
=& \sum_{k\in \Lambda^*} e^{-i \ell e_j\cdot k} \hat{\kappa}(k,0)
\\=&\, \hat{\kappa}(0,0) + \sum_{\substack{
k\in \Lambda^*\setminus\{0\}
}} e^{-i \ell k_j} \hat{\kappa}(k,0)
\\=& \sum_{x\in\Lambda}  \kappa(x,0) + \sum_{\substack{
k\in \Lambda^*\setminus\{0\}
}} \hat{\kappa}(k,0) \cos(\ell k_j) \,.
\end{align*}
Multiplication with $c_\ell$ and summation over $\ell$ gives
\begin{align}\label{eq:corr_dens_0}
\left(\sum_{m=0}^{m} c_\ell\right) \frac{1}{|\Lambda|} \sum_{x\in\Lambda}  \kappa(x,0)
&= \sum_{\ell=0}^{m} c_\ell  \kappa(\ell e_j,0) - \frac{1}{|\Lambda|}\sum_{\substack{
k\in \Lambda^*\setminus\{0\}
}} \hat{\kappa}(k,0) \sum_{\ell =0}^{m} c_m\cos(\ell k_j)
\,.
\end{align}
Summation over $j$ and multiplying with $1/d$ leads to
\begin{align}
\left(\sum_{\ell=0}^{m} c_\ell \right) \frac{1}{|\Lambda|} \sum_{x\in\Lambda} \kappa(x,0)
=& \sum_{\ell =0}^{m} c_\ell \bigg(\frac{1}{d}\sum_{j=1}^d  \kappa(\ell e_j,0)\bigg)
\notag\\ 
&- \frac{1}{|\Lambda|} \frac{1}{d}\sum_{j=1}^d\sum_{\substack{
k\in \Lambda^*\setminus\{0\}
}} \hat{\kappa}(k,0) \sum_{\ell =0}^{m} c_\ell \cos(\ell k_j)
\notag\\ 
=& \sum_{\ell =0}^{m} c_\ell \kappa(\ell e,0) - \frac{1}{|\Lambda|}\sum_{\substack{
k\in \Lambda^*\setminus\{0\}
}} \hat{\kappa}(k,0) \sum_{m=0}^{N-1} \sum_{j=1}^d \frac{c_\ell}{d}\cos(\ell k_j),
\end{align}
where we used the lattice symmetry for $ \kappa(\ell e_j,0)$.
Since $\hat\kappa(k,0)\geq 0$, which is proven in the Appendix, taking the positive part yields an upper bound

\begin{align}\label{eq:corr_dens_1}
\left(\sum_{\ell =0}^{m} c_\ell \right) \frac{1}{|\Lambda|} \sum_{x\in\Lambda} \kappa(x,0)
\geq& \sum_{m=0}^{N-1} c_\ell \kappa(\ell e,0) - \frac{1}{|\Lambda|}\sum_{\substack{
k\in \Lambda^*\setminus\{0\}
}} \hat{\kappa}(k,0) \left( \sum_{\ell=0}^{m} \sum_{j=1}^d \frac{c_\ell}{d} \cos (\ell k_j) \right)_+ \,.
\end{align}

Now we use (\ref{Falkbruch+}) to obtain

\begin{align*}
&\left(\sum_{\ell=0}^{m} c_\ell\right) \frac{1}{|\Lambda|} \sum_{x\in\Lambda}  \kappa(x,0)
\notag\\\geq& \sum_{\ell=0}^{m} c_\ell \kappa(\ell e,0)
- \frac{\colnmb}{2\beta|\Lambda|}\sum_{\substack{
k\in \Lambda^*\setminus\{0\}
}} \frac{1}{\epsilon(k)} \left( \sum_{\ell =0}^{m} \sum_{j=1}^d \frac{c_\ell}{d} \cos (\ell k_j) \right)_+
\notag\\& - \frac{\colnmb}{|\Lambda|}\sqrt{\frac{p_{e,0}}{2}}
\sum_{\substack{
k\in \Lambda^*\setminus\{0\}
}}\sqrt{ u\alpha' + (1-u)(1-\alpha')\frac{\epsilon(k+\pi)}{\epsilon(k)}} \left( \sum_{\ell=0}^{m} \sum_{j=1}^d \frac{c_\ell}{d} \cos (\ell k_j) \right)_+ .
\end{align*}
Taking the limit $|\Lambda|\rightarrow\infty$ and estimating $\alpha'$ by the supremum over $\alpha\in[0,1]$ reveals the integral terms $I^{u,d}_{c}$ and $\tilde I^{d}_{c}$, which concludes the proof.
\end{proof}

We end this section with a discussion of the quantity $\Ical^{u,d}_{c}(\alpha)$ given in \eqref{eq: curly I}. 
We first by recall a tool introduced in \cite{Ueltschi_2013} that can be applied for 
relevant choices of $c$. 
\begin{lem}\label{l:I_max_der}
Let $d\geq 1$, $u\in[0,1/2]$, $m\in\N$, and $c\in\mathbb{R}^{m+1}$.
\begin{itemize}
\item If $(\Ical^{u,d}_{c})'(0)\leq 0$, then $\Ical^{u,d}_{c}(\alpha)$ takes its supremum at $\alpha=0$.
\item If $(\Ical^{u,d}_{c})'(1)\geq 0$, then $\Ical^{u,d}_{c}(\alpha)$ takes its supremum at $\alpha=1$.
\end{itemize}
\end{lem}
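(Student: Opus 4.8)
The plan is to analyse the function $\alpha \mapsto \Ical^{u,d}_{c}(\alpha)$ by differentiating under the integral sign. Write $g_c(k) \coloneqq \big(\sum_{\ell=0}^{m}\sum_{j=1}^{d} \tfrac{c_\ell}{d}\cos(\ell k_j)\big)_+ \geq 0$, which does not depend on $\alpha$, and set $\phi_k(\alpha) \coloneqq u\alpha + (1-u)(1-\alpha)\tfrac{\epsilon(k+\pi)}{\epsilon(k)}$, so that $\Ical^{u,d}_{c}(\alpha) = \int \tfrac{\d^d k}{(2\pi)^d}\, \sqrt{\phi_k(\alpha)}\, g_c(k)$. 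The key structural observation is that $\phi_k$ is an \emph{affine} function of $\alpha$, hence $\alpha \mapsto \sqrt{\phi_k(\alpha)}$ is concave on the set where $\phi_k \geq 0$ (and is identically $0$ where it would be negative, but since $u\in[0,1/2]$ and both coefficients $\tfrac{\epsilon(k+\pi)}{\epsilon(k)}$ and the weights are nonnegative, $\phi_k(\alpha)\geq 0$ for all $\alpha\in[0,1]$). Integrating a nonnegative weight $g_c(k)$ against a concave family of functions yields a concave function: $\Ical^{u,d}_{c}$ is concave on $[0,1]$.

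Once concavity is established, the two bullet points are immediate. First I would justify that $(\Ical^{u,d}_{c})'$ exists on $(0,1)$ and extends continuously (or at least has one-sided limits) to the endpoints, by dominated convergence: the $\alpha$-derivative of the integrand is $\tfrac{1}{2}\phi_k'(\alpha)\phi_k(\alpha)^{-1/2} g_c(k)$ with $\phi_k'(\alpha) = u - (1-u)\tfrac{\epsilon(k+\pi)}{\epsilon(k)}$, which is controlled by $\phi_k(\alpha)^{-1/2}g_c(k)$; one checks integrability using that near $k=0$ one has $\epsilon(k+\pi)\to \epsilon(\pi) = 4d > 0$ while $\epsilon(k)\to 0$, so $\phi_k(\alpha)^{-1/2}\sim (\text{const})\sqrt{\epsilon(k)} = \mathcal{O}(\|k\|)$, harmless; away from $k=0$ everything is bounded, and the possible vanishing of $\phi_k$ only occurs at $\alpha=0$ on the zero set of $\epsilon(k+\pi)$, which has measure zero and is handled by monotone/dominated convergence from the interior. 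For a concave differentiable function on $[0,1]$, the derivative is nonincreasing; hence if $(\Ical^{u,d}_{c})'(0)\leq 0$ then $(\Ical^{u,d}_{c})'\leq 0$ on all of $[0,1]$, so $\Ical^{u,d}_{c}$ is nonincreasing and attains its supremum at $\alpha = 0$; symmetrically, if $(\Ical^{u,d}_{c})'(1)\geq 0$ then $(\Ical^{u,d}_{c})'\geq 0$ throughout, $\Ical^{u,d}_{c}$ is nondecreasing, and the supremum is at $\alpha=1$.

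The main obstacle I anticipate is purely technical: making the differentiation-under-the-integral rigorous at the endpoint $\alpha=0$, where $\phi_k(0) = (1-u)\tfrac{\epsilon(k+\pi)}{\epsilon(k)}$ can vanish (on the zero set of $k\mapsto \epsilon(k+\pi)$, i.e.\ where all $k_j \in \{0\}$ componentwise shifted — actually where $k+\pi \equiv 0$, a single point) and where the integrability of $\phi_k(0)^{-1/2}$ near that point must be checked; and similarly ensuring the one-sided derivative $(\Ical^{u,d}_{c})'(0)$ in the statement is the correct object (the limit of the interior derivatives as $\alpha \downarrow 0$, which exists by monotonicity of difference quotients from concavity). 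An alternative, cleaner route that sidesteps endpoint differentiability altogether is to argue directly from concavity without derivatives: a concave function on $[0,1]$ whose right difference quotient at $0$ is $\leq 0$ is nonincreasing, and one can phrase the hypothesis "$(\Ical^{u,d}_{c})'(0)\leq 0$" as exactly this one-sided condition; I would include a sentence clarifying this interpretation. I would also note explicitly why $\phi_k(\alpha) \geq 0$ for $u \in [0,1/2]$ and $\alpha\in[0,1]$ (both terms are products of nonnegative quantities), which is what keeps $\sqrt{\phi_k(\cdot)}$ genuinely concave rather than merely concave-where-defined, so that the clean concavity conclusion holds on the whole interval.
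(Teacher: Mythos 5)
Your proposal is correct and follows essentially the same route as the paper: establish concavity of $\Ical^{u,d}_{c}$ from the concavity of $\alpha\mapsto\sqrt{\phi_k(\alpha)}$ (the paper does this by computing $f_k''\leq 0$, you by noting $\phi_k$ is affine and nonnegative), then deduce monotonicity from the sign of the endpoint derivative. Your additional care about differentiating under the integral and the measure-zero set where $\epsilon(k+\pi)$ vanishes simply fills in what the paper dismisses as ``standard techniques using the mean value theorem.''
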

\begin{proof}
We consider the function
\begin{align*}
f_k(\alpha) \coloneqq \sqrt{ u\alpha + (1-u)(1-\alpha)\tfrac{\epsilon(k+\pi)}{\epsilon(k)}} \,,
\end{align*}
covering the dependence on $\alpha$ of $\Ical^{u,d}_{c}$ inside the integral. Its derivatives read as
\begin{align*}
f_k'(\alpha) &= \frac{
u\epsilon(k) - (1-u)\epsilon(k+\pi)}{2\sqrt{\epsilon(k)}\sqrt{
u\alpha\epsilon(k) + (1-u)(1-\alpha)\epsilon(k+\pi)}}\,,
\\
f_k''(\alpha) &= -\frac{
(u\epsilon(k) - (1-u)\epsilon(k+\pi))^2}{4\sqrt{\epsilon(k)}\sqrt{
u\alpha\epsilon(k) + (1-u)(1-\alpha)\epsilon(k+\pi)}^3} \,.
\end{align*}
$\Ical^{u,d}_{c}$ is concave since $f''(\alpha)\leq 0$ for all $\alpha\in[0,1]$. The claims follow from standard techniques using the mean value theorem. 
\end{proof}
This tool allows to compute the case $u=0$ straightforwardly. We set 
\begin{align}\label{eq:Jdc_def}
    J^d_c \coloneqq \Ical_{c}^{0,d}(0) = 
    \int_{[0,2\pi]^d} \frac{\mathrm{d}^dk}{(2\pi)^d}\, \sqrt{\frac{\epsilon(k + \pi)}{\epsilon(k)}} \left( \sum_{\ell=0}^{m} \sum_{j=1}^d \frac{c_\ell}{d} \cos(\ell k_j) \right)_+ \,.
\end{align}

\begin{prop}\label{l:I_max_u=0}
Let $d\geq 1$, $m\in\N$, and $c\in\mathbb{R}^{m+1}$. If $u=0$, then $I^{u,d}_{c} = J^d_c$ with $J^d_c$ defined in Eq. \eqref{eq:Jdc_def}.
\end{prop}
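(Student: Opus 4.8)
The plan is to apply Lemma~\ref{l:I_max_der} with $u=0$ to show that the supremum defining $I^{0,d}_c$ is attained at $\alpha=0$, whence $I^{0,d}_c = \Ical^{0,d}_c(0) = J^d_c$ by the very definition \eqref{eq:Jdc_def}. Concretely, by the second bullet of Lemma~\ref{l:I_max_der} it would suffice to check $(\Ical^{0,d}_c)'(1)\geq 0$, but the more natural route here is the first bullet: I would show $(\Ical^{0,d}_c)'(0)\leq 0$, which forces the supremum to sit at $\alpha=0$. (Either works; I expect $\alpha=0$ to be the maximiser since at $u=0$ the $\alpha$-dependence of the integrand is $\sqrt{(1-\alpha)\epsilon(k+\pi)/\epsilon(k)}$, a decreasing function of $\alpha$ on each mode.)

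First I would differentiate under the integral sign. With $u=0$ the function $f_k$ from the proof of Lemma~\ref{l:I_max_der} becomes $f_k(\alpha)=\sqrt{(1-\alpha)\,\epsilon(k+\pi)/\epsilon(k)}$, so
\begin{align*}
f_k'(\alpha) = -\frac{\epsilon(k+\pi)}{2\sqrt{\epsilon(k)}\,\sqrt{(1-\alpha)\epsilon(k+\pi)}}\leq 0
\end{align*}
for every $k$ and every $\alpha\in[0,1)$. Hence, differentiating the integral representation \eqref{eq: curly I} term by term,
\begin{align*}
(\Ical^{0,d}_c)'(\alpha) = \int_{[0,2\pi]^d}\frac{\mathrm{d}^dk}{(2\pi)^d}\, f_k'(\alpha)\,\bigg(\sum_{\ell=0}^m\sum_{j=1}^d\frac{c_\ell}{d}\cos(\ell k_j)\bigg)_+ \leq 0
\end{align*}
since the positive-part factor is nonnegative and $f_k'(\alpha)\leq 0$. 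In particular $(\Ical^{0,d}_c)'(0)\leq 0$. Applying the first bullet of Lemma~\ref{l:I_max_der} gives $\sup_{\alpha\in[0,1]}\Ical^{0,d}_c(\alpha) = \Ical^{0,d}_c(0)$, i.e.\ $I^{0,d}_c = \Ical^{0,d}_c(0)$, and the right-hand side equals $J^d_c$ by \eqref{eq:Jdc_def}. This is the whole argument.

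The only genuine subtlety — the "hard part", such as it is — is the justification of differentiation under the integral sign near $\alpha=1$, where $f_k'(\alpha)$ blows up like $(1-\alpha)^{-1/2}$. For the purpose of this proposition one can sidestep it entirely: it is enough to know $\Ical^{0,d}_c$ is finite and continuous on $[0,1]$ and nonincreasing on $[0,1)$, which follows directly from the pointwise monotonicity $f_k(\alpha)\geq f_k(\alpha')$ for $\alpha\leq\alpha'$ together with the monotone/dominated convergence theorem; no differentiation is then needed to conclude the supremum is at $\alpha=0$. If one prefers to stay within the stated framework of Lemma~\ref{l:I_max_der}, the one-sided derivative $(\Ical^{0,d}_c)'(0)$ is harmless since the integrand and its $\alpha$-derivative are bounded uniformly in $k$ on a neighbourhood of $\alpha=0$ (as $\epsilon(k+\pi)/\epsilon(k)$ is integrable — this is exactly the $d\geq 3$ or $\sum c_\ell=0$ discussion preceding Theorem~\ref{t:main}, but here one only needs local integrability away from the problematic modes, and in fact $\sqrt{\epsilon(k+\pi)/\epsilon(k)}$ is integrable for all $d\geq 1$). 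I would include one sentence to this effect and then invoke Lemma~\ref{l:I_max_der}.
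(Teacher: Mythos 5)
Your proposal is correct and follows essentially the same route as the paper: compute $(\Ical^{0,d}_c)'(\alpha)\leq 0$ (the paper writes the same formula $-\epsilon(k+\pi)/\bigl(2\sqrt{(1-\alpha)\epsilon(k)\epsilon(k+\pi)}\bigr)$ inside the integral), invoke the first bullet of Lemma~\ref{l:I_max_der} to place the supremum at $\alpha=0$, and identify $\Ical^{0,d}_c(0)=J^d_c$ from \eqref{eq:Jdc_def}. Your additional remarks on justifying differentiation under the integral sign (or bypassing it via pointwise monotonicity of $f_k$) are sound extra care that the paper leaves implicit, but they do not change the argument.
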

\begin{proof}
In the case of $u=0$, the derivative of $\Ical^{u,d}_{c}(\alpha)$ simplifies to
\begin{align*}
(\Ical^{0,d}_{c})'(\alpha)=
 \int_{[0,2\pi]^d}\frac{\mathrm{d}^dk}{(2\pi)^d}\, \frac{
-\epsilon(k+\pi)}{2\sqrt{
(1-\alpha)\epsilon(k)\epsilon(k+\pi)}} \bigg( \sum_{\ell =0}^{m} \sum_{j=1}^d \frac{c_\ell}{d} \cos(\ell k_j) \bigg)_+ \leq 0
\end{align*}
for all $\alpha\in[0,1]$. Applying Lemma \ref{l:I_max_der} and computing $\Ical^{0,d}_{c}(0)$ yields the claim.
\end{proof}
For general $u\in[0,1/2]$, this statement equally holds true under certain conditions for the coefficients $c$:
\begin{prop}\label{l:I_max_m_even}
Let $d\geq 1$, $u\in[0,1/2]$, and $m\in\N$. Let further $c\in\mathbb{R}^{m+1}$ fulfill $c_1\geq 0$ and $c_\ell=0$ for all odd $\ell \geq 3$. Then, $I^{u,d}_{c} = \Ical_{c}^{u,d}(0) =  \sqrt{1-u} J^d_c$.
\end{prop}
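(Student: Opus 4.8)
The plan is to apply the first bullet of Lemma~\ref{l:I_max_der}: it suffices to show that $(\Ical^{u,d}_{c})'(0)\le 0$, since then $\Ical^{u,d}_{c}$ attains its supremum over $[0,1]$ at $\alpha=0$, and substituting $\alpha=0$ into the definition of $\Ical^{u,d}_{c}$ collapses the radicand to $(1-u)\epsilon(k+\pi)/\epsilon(k)$, giving $I^{u,d}_{c}=\Ical^{u,d}_{c}(0)=\sqrt{1-u}\,J^d_{c}$ with $J^d_c$ as in \eqref{eq:Jdc_def}.

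Write $P(k)\coloneqq\sum_{\ell=0}^{m}\sum_{j=1}^d\frac{c_\ell}{d}\cos(\ell k_j)$. Differentiating under the integral sign (exactly as in the proof of Proposition~\ref{l:I_max_u=0}) gives
\[
(\Ical^{u,d}_{c})'(0)=\int_{[0,2\pi]^d}\frac{\mathrm{d}^dk}{(2\pi)^d}\;\frac{u\epsilon(k)-(1-u)\epsilon(k+\pi)}{2\sqrt{1-u}\,\sqrt{\epsilon(k)\epsilon(k+\pi)}}\;P(k)_+\,.
\]
Then I would symmetrise the integral using the measure-preserving involution $k\mapsto k+\pi$ of the torus $\mathbb{R}^d/2\pi\mathbb{Z}^d$, under which $\epsilon(k)\leftrightarrow\epsilon(k+\pi)$ and $P(k)\mapsto P(k+\pi)$; since the resulting common prefactor is positive almost everywhere, this reduces the claim to the pointwise estimate, valid for a.e.\ $k$,
\[
\bigl(u\epsilon(k)-(1-u)\epsilon(k+\pi)\bigr)P(k)_++\bigl(u\epsilon(k+\pi)-(1-u)\epsilon(k)\bigr)P(k+\pi)_+\le 0\,.
\]

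The heart of the argument — and the step I expect to be the main obstacle — is this pointwise inequality. Abbreviate $a=\epsilon(k)$, $b=\epsilon(k+\pi)$, $s=ua-(1-u)b$, $t=ub-(1-u)a$, $p=P(k)$, $q=P(k+\pi)$, so that the claim is $sp_++tq_+\le 0$. Two structural facts drive it. First, $s+t=(2u-1)(a+b)\le 0$ because $u\le 1/2$. Second — and this is where the hypotheses on $c$ are used — since $\cos(\ell(k_j+\pi))=(-1)^\ell\cos(\ell k_j)$ and $c_\ell=0$ for every odd $\ell\ge 3$, only the $\ell=1$ term survives in $P(k)-P(k+\pi)$, so $p-q=\tfrac{2c_1}{d}\sum_{j}\cos k_j$; on the other hand $t-s=b-a=\epsilon(k+\pi)-\epsilon(k)=4\sum_{j}\cos k_j$, hence $(p-q)(t-s)=\tfrac{8c_1}{d}\bigl(\sum_{j}\cos k_j\bigr)^2\ge 0$ because $c_1\ge 0$. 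Granting $s+t\le 0$ and $(p-q)(t-s)\ge 0$, a short symmetric case analysis finishes it: the pair $(s,p)\leftrightarrow(t,q)$ leaves both hypotheses and the conclusion invariant, so assume $t\ge s$, which forces $s\le 0$; if $q\le 0$ then $sp_++tq_+=sp_+\le 0$; if $t=s$ then $s=t\le 0$ and both terms are $\le 0$; and if $q>0$ and $t>s$ then $(p-q)(t-s)\ge 0$ gives $p\ge q>0$, whence $sp+tq\le sq+tq=(s+t)q\le 0$.

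Combining the cases, the symmetrised integrand is $\le 0$ almost everywhere, so $(\Ical^{u,d}_{c})'(0)\le 0$; Lemma~\ref{l:I_max_der} then locates the supremum at $\alpha=0$, and the evaluation $\Ical^{u,d}_{c}(0)=\sqrt{1-u}\,J^d_{c}$ completes the proof. The one routine technicality to address is integrability/differentiability at $k=0$ when $\sum_\ell c_\ell>0$ and $d\le 2$: there one simply reads the displayed identity for $(\Ical^{u,d}_{c})'(0)$ as a value in $[-\infty,0]$, which is all that Lemma~\ref{l:I_max_der} requires, a concave function with nonpositive (possibly $-\infty$) right derivative at $0$ being nonincreasing — precisely as is done implicitly in Proposition~\ref{l:I_max_u=0}.
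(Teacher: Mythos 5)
Your proof is correct and follows essentially the same route as the paper: both invoke Lemma~\ref{l:I_max_der}, compute $(\Ical^{u,d}_{c})'(0)$, and exploit the substitution $k\mapsto k+\pi$ together with the parity hypotheses $c_1\geq 0$, $c_\ell=0$ for odd $\ell\geq 3$, before evaluating $\Ical^{u,d}_{c}(0)=\sqrt{1-u}\,J^d_c$. The only difference is organizational: the paper first bounds the integrand using $u\leq 1/2$ (reducing to the $u=1/2$ numerator) and then substitutes on the region $g(k)<0$, whereas you symmetrize the full integral over $k\leftrightarrow k+\pi$ and verify a pointwise inequality $sp_++tq_+\leq 0$ by a short case analysis -- a valid and, if anything, slightly more careful execution of the same idea.
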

\begin{proof}
In keeping with Lemma \ref{l:I_max_der}, we consider
\begin{align*}
(\Ical^{u,d}_{c})'(0)
=&  \int_{[0,2\pi]^d}\frac{\mathrm{d}^dk}{(2\pi)^d}\, \frac{
u\epsilon(k) - (1-u)\epsilon(k+\pi)}{2\sqrt{
(1-u)\epsilon(k)\epsilon(k+\pi)}} \bigg( \sum_{\ell =0}^{m} \sum_{j=1}^d \frac{c_\ell}{d} \cos(\ell k_j) \bigg)_+
\notag\\\leq&
\frac{1}{2\sqrt{2(1-u)}} \int_{[0,2\pi]^d}\frac{\mathrm{d}^dk}{(2\pi)^d}\, \frac{
\epsilon(k) - \epsilon(k+\pi)}{\sqrt{\epsilon(k)\epsilon(k+\pi)}} \bigg( \sum_{\ell =0}^{m} \sum_{j=1}^d \frac{c_\ell}{d} \cos(\ell k_j) \bigg)_+
\end{align*}
using that $u\leq 1/2$ and thus $1-u\geq 1/2$. Next, we abbreviate $g(k) \coloneqq \sum_{j=1}^d\cos(k_j)/d$. Noting that $\epsilon(k) = 2 d\big(1-g(k)\big)$ and $\epsilon(k+\pi) = 2 d\big(1+g(k)\big)$ yields
\begin{align}\label{eq:I'(0)_m_even}
(\Ical^{u,d}_{c})'(0) \leq&
\frac{1}{\sqrt{2(1-u)}} \int_{[0,2\pi]^d}\frac{\mathrm{d}^dk}{(2\pi)^d} \, \frac{
-g(k)}{\sqrt{\big(1-g(k)\big)\big(1+g(k)\big)}} 
\notag\\&\times \bigg( c_1g(k) + \sum_{\substack{\ell=0 \\ \ell \text{ even}}}^m \sum_{j=1}^d \frac{c_\ell}{d} \cos(\ell k_j) \bigg)_+
\,,
\end{align}
We split the integral into the cases $g(k)>0$ and $g(k)<0$ while the integral restricted to $g(k)=0$ vanishes. In the second case, we substitute $k\mapsto k+\pi$ to give
\begin{align*}
&\int_{[0,2\pi]^d}\mathrm{d}^dk\,\textbf{1}_{g(k)<0} \frac{
-g(k)}{\sqrt{1-g(k)^2}} \bigg(c_1g(k) + \sum_{\substack{\ell=0 \\ \ell \text{ even}}}^m \sum_{j=1}^d \frac{c_\ell }{d} \cos(\ell k_j) \bigg)_+
\notag\\=&
\int_{[0,2\pi]^d}\mathrm{d}^dk\,\textbf{1}_{g(k)>0} \frac{
g(k)}{\sqrt{1-g(k)^2}} \bigg(-c_1g(k) + \sum_{\substack{\ell =0 \\ \ell \text{ even}}}^m \sum_{j=1}^d \frac{c_\ell}{d} \cos(\ell k_j) \bigg)_+
\notag\\\leq &
\int_{[0,2\pi]^d}\mathrm{d}^dk\,\textbf{1}_{g(k)>0} \frac{
g(k)}{\sqrt{1-g(k)^2}} \bigg(c_1g(k) + \sum_{\substack{\ell=0 \\ \ell \text{ even}}}^m \sum_{j=1}^d \frac{c_\ell}{d} \cos(\ell k_j) \bigg)_+
\end{align*}
using that $c_1g(k)\geq 0$ in the last step. We notice that the latter expression corresponds to the integral in Eq. (\ref{eq:I'(0)_m_even}) restricted to $g(k)>0$ except for the reversed sign. Inserting this estimation into Eq. (\ref{eq:I'(0)_m_even}) thus reveals $(\Ical^{u,d}_{c})'(0) \leq 0$. Lemma \ref{l:I_max_der} concludes the proof.
\end{proof}
We now discuss the case $c = (1,-1)$. While the case $u=0$ is already covered by Lemma \ref{l:I_max_u=0}, the case $u=1/2$ is discussed in the following:
\begin{prop}\label{l:I_max_(1,-1)}
Let $d\geq 1$ and set $c=(1,-1)$. If $u=1/2$, then $I^{u,d}_{c}=1/\sqrt{2}$.
\end{prop}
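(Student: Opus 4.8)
The plan is to specialise $\Ical^{u,d}_{c}$ to $c=(1,-1)$ and $u=1/2$, reduce it to a one--parameter family of torus integrals, and bound that family above by $1/\sqrt2$ through a pointwise arithmetic--geometric mean estimate, while checking that the value $1/\sqrt2$ is actually attained (at $\alpha=1$). Writing $g(k)\coloneqq\frac1d\sum_{j=1}^d\cos k_j$ as in the proof of Proposition~\ref{l:I_max_m_even}, the choice $c=(1,-1)$ gives $\sum_{\ell}\sum_{j}\frac{c_\ell}{d}\cos(\ell k_j)=1-g(k)\ge0$, so the positive part is inactive. Inserting $\epsilon(k)=2d(1-g(k))$, $\epsilon(k+\pi)=2d(1+g(k))$ and $u=1/2$ into \eqref{eq: curly I}, the radicand becomes $\tfrac12\,\frac{1+(1-2\alpha)g(k)}{1-g(k)}$, and since $1-g(k)\ge0$ one may absorb the factor $1-g(k)$ under the root to get
\[
\Ical^{1/2,d}_{c}(\alpha)=\frac{1}{\sqrt2}\int_{[0,2\pi]^d}\frac{\mathrm d^dk}{(2\pi)^d}\,\sqrt{\big(1+(1-2\alpha)g(k)\big)\big(1-g(k)\big)}\,.
\]

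Next I would set $s\coloneqq 1-2\alpha\in[-1,1]$ and apply $\sqrt{ab}\le\tfrac12(a+b)$ pointwise with $a=1+sg(k)\ge0$ (valid since $|s\,g(k)|\le1$) and $b=1-g(k)\ge0$; this yields $\sqrt{(1+sg(k))(1-g(k))}\le 1+\tfrac{s-1}{2}g(k)$. Integrating and using that $g$ has zero average over the torus gives $\Ical^{1/2,d}_{c}(\alpha)\le1/\sqrt2$ for all $\alpha\in[0,1]$. Finally, at $\alpha=1$ (i.e.\ $s=-1$) the integrand is exactly $\sqrt{(1-g(k))^2}=1-g(k)$, so $\Ical^{1/2,d}_{c}(1)=\tfrac{1}{\sqrt2}\int(1-g(k))=1/\sqrt2$; hence the supremum equals $1/\sqrt2$ and is attained at $\alpha=1$.

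I do not expect a genuine obstacle here: the argument is elementary once the radicand has been rewritten in terms of $g(k)$. The only place to be careful is verifying the hypotheses of the arithmetic--geometric mean step --- that $1+s\,g(k)$ and $1-g(k)$ are both nonnegative for every $k$ and every $s\in[-1,1]$ --- and recalling $\int_{[0,2\pi]^d}(2\pi)^{-d}g(k)\,\mathrm d^dk=0$. One could instead try to pin the supremum at $\alpha=1$ using Lemma~\ref{l:I_max_der}, but the denominator $\sqrt{u\alpha\epsilon(k)+(1-u)(1-\alpha)\epsilon(k+\pi)}$ appearing in $f_k'$ degenerates at $k=0$ for $u=1/2$, $\alpha=1$, so the direct estimate above is cleaner and avoids any integrability fuss.
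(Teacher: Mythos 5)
Your proof is correct, but it takes a genuinely different route from the paper. The paper invokes Lemma \ref{l:I_max_der}: it computes the boundary derivative $(\Ical^{1/2,d}_{(1,-1)})'(1)$, observes that the positive-part factor $\big(1-\tfrac1d\sum_j\cos k_j\big)_+=\epsilon(k)/(2d)$ cancels the $\epsilon(k)$ in the denominator, and concludes $(\Ical^{1/2,d}_{(1,-1)})'(1)=0$ by the substitution $k\mapsto k+\pi$; concavity (the content of Lemma \ref{l:I_max_der}) then pins the supremum at $\alpha=1$, where the integral is evaluated as $1/\sqrt2$. You instead rewrite the integrand as $\tfrac{1}{\sqrt2}\sqrt{(1+(1-2\alpha)g(k))(1-g(k))}$ and apply the pointwise AM--GM inequality, using $\int g=0$ to get the uniform bound $\Ical^{1/2,d}_{(1,-1)}(\alpha)\le 1/\sqrt2$ for all $\alpha\in[0,1]$, with equality checked directly at $\alpha=1$. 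Your hypotheses ($1-g\ge0$, $1+sg\ge0$ for $s\in[-1,1]$, and the measure-zero set $g=1$ where the rewriting needs care) are all verified, so the argument is complete and entirely elementary: it bypasses both the concavity lemma and any derivative computation, at the cost of being tailored to this specific $c$ and $u$, whereas the paper's mechanism is the same one it reuses for Propositions \ref{l:I_max_u=0} and \ref{l:I_max_m_even}. One small remark: the degeneracy of the denominator of $f_k'$ at $k=0$ that you cite as a reason to avoid Lemma \ref{l:I_max_der} is in fact harmless, precisely because the factor $(1-g(k))_+$ vanishes there and cancels $\epsilon(k)$ --- this cancellation is exactly the first step of the paper's computation --- so both routes are equally rigorous; yours is simply more self-contained.
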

\begin{proof}
According to Lemma \ref{l:I_max_der}, it suffices to show that
\begin{align*}
(\Ical^{1/2,d}_{(1,-1)})'(1) =&  \int_{[0,2\pi]^d}\frac{\mathrm{d}^dk}{(2\pi)^d}\, \frac{
\epsilon(k)-\epsilon(k+\pi)}{4\sqrt{
2}\epsilon(k)} \bigg(1-\frac{1}{d}\sum_{j=1}^d \cos(k_j)\bigg)_+
\notag\\=& \frac{1}{8\sqrt{
2}d}  \int_{[0,2\pi]^d}\frac{\mathrm{d}^dk}{(2\pi)^d}\,
\big(\epsilon(k)-\epsilon(k+\pi)\big) 
\notag\\=&\, 0 \,,
\end{align*}
where the last step follows from substituting $k\mapsto k+\pi$ in the second term. The claim follows as
\begin{align*}
I^{1/2,d}_{(1,-1)} = \Ical^{1/2,d}_{(1,-1)}(1)
=  \int_{[0,2\pi]^d}\frac{\mathrm{d}^dk}{(2\pi)^d}\, \frac{1}{\sqrt{2}} \bigg(1- \frac{1}{d} \sum_{j=1}^d \cos(k_j) \bigg)_+
= \frac{1}{\sqrt{2}} \,.
\end{align*}
\end{proof}

\section{Consequences}
\subsection{Lower bound for nearest neighbor connection probabilities}

In the following, we vary \( c \) and derive several corollaries. First, we establish a lower bound on the connection probability \( \P_{\beta,\theta,u}\big(E_{0,e,0}\big) \). To this end, we define
\begin{align*}
    P_{\beta,\theta,u,d} \coloneqq
    \Bigg(\sqrt{\bigg(\frac{\colnmb}{2\sqrt{2}} I^{u,d}_{(1,-1)}\bigg)^2 + 1 - \frac{\colnmb}{4d\beta}} - \frac{\colnmb}{2\sqrt{2}} I^{u,d}_{(1,-1)}\Bigg)^2
\end{align*}
for all values of $\beta>0$ such that $1 - \frac{\theta}{4 d \beta} \geq 0$. We define  $P_{\theta,u,d}\coloneqq\lim_{\beta\to\infty}P_{\beta,\theta,u,d}$.

\begin{prop}\label{c:m=1_bound}
    Let \( d \geq 1 \) and $\beta \geq \frac{\theta}{4d}$. Then we have
    \[
        \P_{\beta,\theta,u}\big(E_{0,e,0}\big) \geq P_{\beta,\theta,u,d}.
    \]
\end{prop}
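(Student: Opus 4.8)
The plan is to apply Theorem~\ref{t:main} with the specific choice $c = (1,-1)$, i.e.\ $m=1$, $c_0 = 1$, $c_1 = -1$, so that $\sum_{\ell=0}^1 c_\ell = 0$. With this choice the left-hand side of the main inequality vanishes, and the right-hand side becomes
\begin{align*}
0 &\geq c_0\,\P_{\beta,\theta,u}(E_{0,0,0}) + c_1\,\P_{\beta,\theta,u}(E_{0,e,0})
- \theta\sqrt{\tfrac{\P_{\beta,\theta,u}(E_{0,e,0})}{2}}\, I^{u,d}_{(1,-1)}
- \tfrac{\theta}{2\beta}\,\tilde I^d_{(1,-1)}\,.
\end{align*}
Since $E_{0,0,0}$ is the event that $(0,0)$ is connected to itself through time $0$, which holds with probability $1$, we have $\P_{\beta,\theta,u}(E_{0,0,0}) = 1$. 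So the inequality reads, writing $p \coloneqq \P_{\beta,\theta,u}(E_{0,e,0})$ for brevity,
\begin{align*}
p + \theta\sqrt{\tfrac{p}{2}}\, I^{u,d}_{(1,-1)} + \tfrac{\theta}{2\beta}\,\tilde I^d_{(1,-1)} \geq 1\,.
\end{align*}

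Next I would compute $\tilde I^d_{(1,-1)}$ explicitly. By definition,
\begin{align*}
\tilde I^d_{(1,-1)} = \int_{[0,2\pi]^d}\frac{\mathrm{d}^dk}{(2\pi)^d}\,\frac{1}{\epsilon(k)}\bigg(\frac{1}{d}\sum_{j=1}^d\big(1 - \cos(k_j)\big)\bigg)_+\,.
\end{align*}
The argument of the positive part is $\tfrac{1}{d}\sum_j(1-\cos k_j) = \epsilon(k)/(2d) \geq 0$, so the positive part is inactive and the integrand is simply $\epsilon(k)/(2d\,\epsilon(k)) = 1/(2d)$, giving $\tilde I^d_{(1,-1)} = 1/(2d)$. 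Substituting this yields
\begin{align*}
p + \tfrac{\theta}{2\sqrt 2}\, I^{u,d}_{(1,-1)}\,\sqrt{p}\,\sqrt{2}\cdot\tfrac{1}{\sqrt 2}\cdots
\end{align*}
— more carefully, $\theta\sqrt{p/2}\,I = \tfrac{\theta I}{\sqrt 2}\sqrt p$, but to match the stated $P_{\beta,\theta,u,d}$ I would write the coefficient as $2\cdot\tfrac{\theta}{2\sqrt 2}I^{u,d}_{(1,-1)}$; set $A \coloneqq \tfrac{\theta}{2\sqrt 2}I^{u,d}_{(1,-1)}$. Then $\theta\sqrt{p/2}\,I^{u,d}_{(1,-1)} = 2A\sqrt p$, and the inequality becomes
\begin{align*}
p + 2A\sqrt p + \tfrac{\theta}{4d\beta} - 1 \geq 0\,.
\end{align*}

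Now I would treat this as a quadratic inequality in $\sqrt p$. Writing $t = \sqrt p \geq 0$, we have $t^2 + 2At - (1 - \tfrac{\theta}{4d\beta}) \geq 0$. Under the hypothesis $\beta \geq \tfrac{\theta}{4d}$ the constant $1 - \tfrac{\theta}{4d\beta} \geq 0$ is nonnegative, so the quadratic $t^2 + 2At - (1-\tfrac{\theta}{4d\beta})$ has one nonnegative root $t_* = -A + \sqrt{A^2 + 1 - \tfrac{\theta}{4d\beta}}$ and one nonpositive root; being an upward parabola, the inequality forces $t \geq t_*$, i.e.\ $\sqrt p \geq -A + \sqrt{A^2 + 1 - \tfrac{\theta}{4d\beta}}$. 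Squaring (both sides nonnegative) gives exactly $p \geq P_{\beta,\theta,u,d}$ as defined. The $\beta\to\infty$ statement for $P_{\theta,u,d}$ follows by taking limits, provided one knows $I^{u,d}_{(1,-1)}$ stays finite and the connection probability has a limit, but the Proposition as stated only claims the finite-$\beta$ bound. The main (only mild) obstacle is bookkeeping the algebra so the coefficient $\tfrac{\theta}{2\sqrt2}I^{u,d}_{(1,-1)}$ matches the definition of $P_{\beta,\theta,u,d}$, and justifying $\P_{\beta,\theta,u}(E_{0,0,0})=1$ and that the positive part in $\tilde I^d_{(1,-1)}$ is trivial; neither is deep.
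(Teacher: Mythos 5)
Your proposal is correct and follows essentially the same route as the paper: take $c=(1,-1)$ in Theorem~\ref{t:main} so the left-hand side vanishes, use $\P_{\beta,\theta,u}(E_{0,0,0})=1$, compute $\tilde I^d_{(1,-1)}=1/(2d)$, and solve the resulting quadratic inequality in $\sqrt{\P_{\beta,\theta,u}(E_{0,e,0})}$, where the hypothesis $\beta\geq\theta/(4d)$ guarantees the nonnegativity of the constant term. The only difference is that you spell out the details (positive part inactive, root selection) that the paper leaves implicit.
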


\begin{proof}
    Take \( N =2, c = (1, -1) \) in Theorem \ref{t:main}. Then we obtain
    \begin{align}\label{eq:abs_bound}
        0 \geq 1-\P_{\beta,\theta,u}\big(E_{0,e,0}\big) - \colnmb \sqrt{\tfrac{\P_{\beta,\theta,u}\big(E_{0,e,0}\big)}{2}} I^{u,d}_{(1,-1)} - \frac{\colnmb}{4d\beta} \,,
    \end{align}
    where we computed $\tilde I^d_{(1,-1)} = 1/(2d)$. By solving the associated quadratic equation in $\sqrt{\P_{\beta,\theta,u}(E_{0,e,0})}$, we derive the desired bound.
\end{proof}
For the special case $u=1/2$, Proposition \ref{l:I_max_(1,-1)} leads to the simpler expression 
   \begin{equation*}
       P_{\beta,\theta,1/2,d} =
\bigg(\sqrt{\Big(\frac{\colnmb}{4}\Big)^2+1- \frac{\colnmb}{4d\beta}}-\frac{\colnmb}{4} \bigg)^2 \,.
   \end{equation*}
For $u=0$, we can use Proposition \ref{l:I_max_u=0} in order to compute $ P_{\beta,\theta,1/2,d}$ numerically. 
Tables \ref{tab:p_bounds_m=1_u=0} and \ref{tab:p_bounds_m=1_u=1/2} contain results of these calculations for $\beta \to \infty$.

\begin{table}[!ht]
\centering
{\renewcommand{\arraystretch}{1.2}
\begin{tabular}{cccccccccc}
\hline\hline
$\theta$ & $d=1$ & $d=2$ & $d=3$ & $d=4$ & $d=5$ & $d=6$ & $d=7$ & $d=8$ & $d=9$
\\ \hline
$2$ & $0.417$ & $0.323$ & $0.300$ & $0.290$ & $0.285$ & $0.282$ & $0.280$ & $0.278$ & $0.277$
\\ \hline
$3$ & $0.282$ & $0.200$ & $0.182$ & $0.174$ & $0.170$ & $0.168$ & $0.166$ & $0.165$ & $0.164$
\\ \hline
$4$ & $0.198$ & $0.132$ & $0.119$ & $0.113$ & $0.110$ & $0.108$ & $0.107$ & $0.106$ & $0.105$
\\ \hline
$5$ & $0.144$ & $0.092$ & $0.082$ & $0.078$ & $0.076$ & $0.075$ & $0.074$ & $0.073$ & $0.072$
\\\hline\hline
\end{tabular}
}
\caption{Lower bound $P_{\theta,u,d}$ of $\P_{\beta,\theta,u}\big(E_{0,e,0}\big)$ for $\beta\to\infty$, $u=0$, various dimensions $d$, and values of $\colnmb\in\N_{\geq2}$.}
\label{tab:p_bounds_m=1_u=0}
\end{table}
\begin{table}[!ht]
\centering
{\renewcommand{\arraystretch}{1.2}
\begin{tabular}{cccc}
\hline\hline
$\colnmb=2$ & $\colnmb=3$ & $\colnmb=4$ & $\colnmb=5$
\\ \hline
0.381 & 0.250 & 0.171 & 0.123
\\\hline\hline
\end{tabular}
}
\caption{Lower bound $P_{\theta,u,d}$ of $\P_{\beta,\theta,u}\big(E_{0,e,0}\big)$ for $\beta\to\infty$, $u=1/2$, and various values of $\theta\in\N_{\geq2}$. The lower bound turns out to be constant w.r.t. $d$.}
\label{tab:p_bounds_m=1_u=1/2}
\end{table}
\subsection{Lower bound for finite range connection probabilities}
Next, we derive lower bounds for \( \P_{\beta,\theta,u}\big(E_{0,me,0}\big) \) with \( m \geq 2 \). To this end, let $p\in[0,1]$ and \( h_{p} : \mathbb{R}_{\geq 0} \to \mathbb{R}_{\geq 0} \) be defined as
\begin{align}\label{eq:h_fct}
    h_{p}(x) \coloneqq 
    \begin{cases}
        1 - p + 2x\sqrt{p}, & \text{if } x \leq \sqrt{p}, \\
        1 + x^2, & \text{if } \sqrt{p} < x \leq 1, \\
        2x, & \text{if } x > 1.
    \end{cases}
\end{align}
Also, let $\eta \geq 0$, $m \geq 2$, and set 
\[
c(\eta) = (1 - \eta, \eta, 0, \ldots, 0, -1) \in \R^{m+1}. 
\]

\begin{prop}\label{c:m>=2_bound}
    For $d\geq1$ and $m\geq2$, we have
    \begin{align}\label{eq:m>=2_bound}
        \P_{\beta,\theta,u}\big(E_{0,me,0}\big) \geq \sup_{\eta \geq 0} \bigg( 1 - \eta \, h_{P_{\beta,\theta,u,d}}\left( \frac{\colnmb}{2\sqrt{2}\eta} I^{u,d}_{c(\eta)} \right) - \frac{\colnmb}{2\beta} \tilde I^d_{c(\eta)} \bigg).
    \end{align}
 
\end{prop}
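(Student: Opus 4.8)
The plan is to apply Theorem~\ref{t:main} with the coefficient vector $c(\eta) = (1-\eta, \eta, 0, \ldots, 0, -1)$, for which $\sum_{\ell=0}^m c_\ell = 0$, so that the left-hand side of the theorem vanishes and we are left with an inequality of the form
\[
0 \geq (1-\eta)\P_{\beta,\theta,u}(E_{0,0,0}) + \eta\,\P_{\beta,\theta,u}(E_{0,e,0}) - \P_{\beta,\theta,u}(E_{0,me,0}) - \colnmb\sqrt{\tfrac{\P_{\beta,\theta,u}(E_{0,e,0})}{2}}\, I^{u,d}_{c(\eta)} - \tfrac{\colnmb}{2\beta}\tilde I^d_{c(\eta)}.
\]
Using $\P_{\beta,\theta,u}(E_{0,0,0}) = 1$ and rearranging gives a lower bound on $\P_{\beta,\theta,u}(E_{0,me,0})$ in terms of $\P_{\beta,\theta,u}(E_{0,e,0})$, $\eta$, and the two integrals. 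Since this holds for every $\eta \geq 0$, I can take the supremum over $\eta$ at the end.

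The key remaining step is to bound the term $(1-\eta) + \eta\,\P_{\beta,\theta,u}(E_{0,e,0}) - \colnmb\sqrt{\P_{\beta,\theta,u}(E_{0,e,0})/2}\,I^{u,d}_{c(\eta)}$ from below using only the known lower bound $p := \P_{\beta,\theta,u}(E_{0,e,0}) \geq P_{\beta,\theta,u,d}$ from Proposition~\ref{c:m=1_bound}. Writing $x := \frac{\colnmb}{2\sqrt{2}\eta} I^{u,d}_{c(\eta)}$ and $y := \sqrt{p}$, the relevant expression is $1 - \eta + \eta y^2 - 2\eta x y = 1 - \eta(1 + 2xy - y^2)$, and I want to show $1 + 2xy - y^2 \leq h_{P_{\beta,\theta,u,d}}(x)$ whenever $y^2 \geq P_{\beta,\theta,u,d}$ and $y \in [0,1]$. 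The function $q(y) = 1 + 2xy - y^2$ on $[\sqrt{P_{\beta,\theta,u,d}}, 1]$ is concave with vertex at $y = x$; I would simply compute its maximum over the admissible interval $[\sqrt{P_{\beta,\theta,u,d}},1]$, which falls into exactly three regimes according to whether the vertex $x$ lies below $\sqrt{P_{\beta,\theta,u,d}}$, inside the interval, or above $1$. These three regimes reproduce precisely the three cases in the definition~\eqref{eq:h_fct} of $h_p$: if $x \leq \sqrt{p}$ the maximum is at the left endpoint and equals $1 - p + 2x\sqrt{p}$; if $\sqrt{p} < x \leq 1$ the maximum is at $y = x$ and equals $1 + x^2$; if $x > 1$ the maximum is at $y = 1$ and equals $2x$.

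Plugging this bound in and substituting $x = \frac{\colnmb}{2\sqrt{2}\eta} I^{u,d}_{c(\eta)}$ yields, for every $\eta \geq 0$,
\[
\P_{\beta,\theta,u}(E_{0,me,0}) \geq 1 - \eta\, h_{P_{\beta,\theta,u,d}}\!\left(\tfrac{\colnmb}{2\sqrt{2}\eta} I^{u,d}_{c(\eta)}\right) - \tfrac{\colnmb}{2\beta}\tilde I^d_{c(\eta)},
\]
and taking the supremum over $\eta \geq 0$ gives~\eqref{eq:m>=2_bound}. I expect the main (though still elementary) obstacle to be the case analysis for the maximum of the concave quadratic $q(y)$ over the constrained interval, together with checking that using the \emph{lower} bound $P_{\beta,\theta,u,d}$ for $p$ (rather than the true value $\P_{\beta,\theta,u}(E_{0,e,0})$) is legitimate — i.e., that $h_p(x)$ is monotone in $p$ in the direction needed so that replacing $p$ by the smaller quantity $P_{\beta,\theta,u,d}$ only weakens the bound in our favour. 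One should also take a small amount of care that $\tilde I^d_{c(\eta)} < \infty$, which holds here because $\sum_\ell c_\ell(\eta) = 0$, so the $\beta$-term is harmless for all $d \geq 1$.
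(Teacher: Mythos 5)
Your proposal is correct and follows essentially the same route as the paper: apply Theorem~\ref{t:main} with $c(\eta)=(1-\eta,\eta,0,\dots,0,-1)$, use Proposition~\ref{c:m=1_bound} to restrict $p=\P_{\beta,\theta,u}(E_{0,e,0})$ to $[P_{\beta,\theta,u,d},1]$, and optimize the resulting quadratic in $\sqrt{p}$, whose three-case worst value is exactly $h_{P_{\beta,\theta,u,d}}$. Your maximization of the concave $q(y)=1+2xy-y^2$ over $y\in[\sqrt{P_{\beta,\theta,u,d}},1]$ is just the paper's minimization of $f_\eta(p)$ at $p_\ast(\eta)$ in different clothing, so no further comment is needed.
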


\begin{proof}
Since the coefficients of $c(\eta)$ sum to zero, the left hand side in Theorem \ref{t:main} vanishes. Rearranging the remaining terms gives the inequality 
\begin{align}\label{eq:corr_bound_(1-c1,c1)}
\P_{\beta,\theta,u}\big(E_{0,me,0}\big) \geq 
(1-\eta) + \eta \P_{\beta,\theta,u}\big(E_{0,e,0}\big) - \colnmb\sqrt{\frac{\P_{\beta,\theta,u}\big(E_{0,e,0}\big)}{2}} I^{u,d}_{c(\eta)} - \frac{\colnmb}{2\beta} \tilde I^d_{c(\eta)} \,,
\end{align}
which holds for all $\eta \geq 0$. By Proposition 
\ref{c:m=1_bound}, we know that  $\P_{\beta,\theta,u}\big(E_{0,e,0}\big) \in [P_{\beta,\theta,u,d},1]$, and so we minimize the above expression over this interval. We define
\[
g(\eta) = \frac{\colnmb}{2\sqrt{2}\eta}I^{u,d}_{c(\eta)}
\]
and set 
\[
f_{\eta}(p) =  (1-\eta) + \eta p - 2 \eta \sqrt{p} g(\eta) - \frac{\colnmb}{2\beta} \tilde I^d_{c(\eta)}
= \eta \big(\sqrt{p}-g(\eta)\big)^2 - \eta g(\eta)^2 + 1 - \frac{\colnmb}{2\beta} \tilde I^d_{c(\eta)} \, .
\]
for $p\in[P_{\beta,\theta,u,d},1]$. The function $f_{\eta}$ takes its minimum at 
\[
p_\ast(\eta) = \begin{cases} 
P_{\beta,\theta,u,d} & \text{if } g(\eta) \leq \sqrt{P_{\beta,\theta,u,d}}, \\
g(\eta)^2 & \text{if } \sqrt{P_{\beta,\theta,u,d}} \leq 
g(\eta) \leq 1 \\
1 & \text{if } g(\eta) > 1,
\end{cases} .
\]
This gives 
\[
\P_{\beta,\theta,u}\big(E_{0,me,0}\big) \geq f_\eta(p_\ast(\eta)) = 1 - 
\eta h_{P_{\beta,\theta,u,d}}(g(\eta))  - \frac{\colnmb}{2\beta} \tilde I^d_{c(\eta)} \,.
\]
The result follows by maximizing over $\eta$. 
\end{proof}

In Table \ref{tab:p_bounds_m>=2} , we present lower bounds for the case $\beta\to\infty$, \( \theta = 2 \), and \( u = \frac{1}{2} \) across several dimensions and values for $m\geq 2$. For even $m$, we used Proposition \ref{l:I_max_m_even} to facilitate the numerical computation. In contrast, numerical values for $m\in\{3,5\}$ were obtained by computing the supremum over $\alpha$ numerically. 

\subsection{Lower bounds for long range order}

The next case we want to address is the limit as \( m 
\to \infty \) along with $m\in 2\N$. As a preparation, 
we investigate the limit of the quantities $J_c^d$ from  
\eqref{eq:Jdc_def} and $\tilde I_c^d$ from \eqref{eq:Idc_def} for certain sequences of vectors $c$. 

\begin{lem} \label{lem:limit}
Let $n \in \N$, $c = (c_0, \ldots, c_{n}) \in \R^{n+1}$, 
and 
$c_\infty = \sum_{\ell=0}^n c_\ell$. 
For 
$N \in \N$ with $N > n+1$ set 
\[
c(N) = (c_0, \ldots, c_n,0, \ldots, 0, c_\infty) \in 
\R^N, \quad J^d_{c,N} = J^d_{c(N)}, \quad \tilde I^d_{c,N} = \tilde I^d_{c(N)}.
\]
Then 
\[
J^d_{c,\infty} \coloneqq \lim_{N \to \infty} J^d_{c,N} = 
\int_{[0,2\pi]^{2d}}\frac{\text{d}^d k}{(2\pi)^{d}} \, \frac{\text{d}^d\tilde k}{(2\pi)^{d}}\,\sqrt{\frac{\epsilon( k+\pi)}{\epsilon( k)}}
\bigg( \sum_{j=1}^d \Big(\sum_{\ell=0}^{n}\frac{c_\ell}{d} \cos(\ell k_j) + \frac{c_\infty}{d} \cos(\tilde  k_j) \Big) \bigg)_+
\,,
\]
and 
\[
\tilde I^d_{c,\infty} \coloneqq  \lim_{N \to \infty} \tilde I^d_{c,N} = 
\int_{[0,2\pi]^{2d}}\frac{\text{d}^dk}{(2\pi)^{d}} \, \frac{\text{d}^d\tilde k}{(2\pi)^{d}}\,\frac{1}{\epsilon(k)}
\bigg( \sum_{j=1}^d \Big(\sum_{\ell=0}^{n}\frac{c_\ell}{d} \cos(\ell k_j) + \frac{c_\infty}{d} \cos(\tilde k_j) \Big) \bigg)_+ \,.
\]
\end{lem}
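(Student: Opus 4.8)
The plan is to isolate the one genuinely non-trivial point — that a rapidly oscillating cosine decouples from its own argument in the limit — and to reduce everything else to dominated convergence. Let $M_N$ be the degree of the top-order term of the vector $c(N)$, so $M_N\to\infty$ as $N\to\infty$; let $\mu_N$ be the image of the normalised Lebesgue measure on the torus $[0,2\pi]^d$ under $k\mapsto\big(k,\,M_Nk\bmod 2\pi\big)\in[0,2\pi]^{2d}$; and let $\mu$ be the normalised Lebesgue measure on $[0,2\pi]^{2d}$. Both $J^d_{c,N}$ and $\tilde I^d_{c,N}$ have the form $\int g\,\mathrm{d}\mu_N$ with
\[
g(k,\tilde k)=\phi(k)\,\psi(k,\tilde k),\qquad \psi(k,\tilde k)=\Big(\sum_{j=1}^d\Big(\sum_{\ell=0}^n\tfrac{c_\ell}{d}\cos(\ell k_j)+\tfrac{c_\infty}{d}\cos\tilde k_j\Big)\Big)_+\!,
\]
where $\phi(k)=\sqrt{\epsilon(k+\pi)/\epsilon(k)}$ for $J$ and $\phi(k)=1/\epsilon(k)$ for $\tilde I$. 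Crucially, $\phi$ is independent of $N$ and continuous away from $k=0$, while $\psi$ is continuous on all of $[0,2\pi]^{2d}$ and bounded by $C\coloneqq\sum_{\ell=0}^n|c_\ell|+|c_\infty|$. Since the right-hand sides in the lemma are exactly $\int g\,\mathrm{d}\mu$, it suffices to prove $\int g\,\mathrm{d}\mu_N\to\int g\,\mathrm{d}\mu$.

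First I would establish that $\mu_N$ converges weakly to $\mu$. By the Stone--Weierstrass theorem it is enough to test against the characters $e^{\mathrm{i}(a\cdot k+b\cdot\tilde k)}$, $a,b\in\Z^d$, and for those
\[
\int e^{\mathrm{i}(a\cdot k+b\cdot\tilde k)}\,\mathrm{d}\mu_N=\prod_{j=1}^d\int_0^{2\pi}e^{\mathrm{i}(a_j+M_Nb_j)k_j}\,\tfrac{\mathrm{d}k_j}{2\pi}=\prod_{j=1}^d\1_{\{a_j+M_Nb_j=0\}},
\]
which equals $\1_{\{a=0\}}\1_{\{b=0\}}=\int e^{\mathrm{i}(a\cdot k+b\cdot\tilde k)}\,\mathrm{d}\mu$ once $M_N>\max_j|a_j|$. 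Hence $\int F\,\mathrm{d}\mu_N\to\int F\,\mathrm{d}\mu$ for all trigonometric polynomials $F$, and then, by uniform approximation, for all $F\in C([0,2\pi]^{2d})$.

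Next I would upgrade this from continuous test functions to the singular, unbounded $g$. Because $g\ge 0$ is continuous off the $\mu$-null slice $\{0\}\times[0,2\pi]^d$, the portmanteau theorem gives $\int g_R\,\mathrm{d}\mu_N\to\int g_R\,\mathrm{d}\mu$ for every truncation $g_R\coloneqq\min\{g,R\}$ (these are bounded and $\mu$-a.e.\ continuous). I would then bound $g$ by an $N$-independent, $\mu$-integrable $\Phi$ depending only on $k$; since the $k$-marginal of every $\mu_N$ and of $\mu$ is the normalised Lebesgue measure on $[0,2\pi]^d$, this yields $\int(g-g_R)\,\mathrm{d}\mu_N\le\int_{\{\Phi>R\}}\Phi(k)\,\tfrac{\mathrm{d}^dk}{(2\pi)^d}$, which is independent of $N$ and tends to $0$ as $R\to\infty$, together with the same bound with $\mu$ in place of $\mu_N$. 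The three-term decomposition $\int g\,\mathrm{d}\mu_N-\int g\,\mathrm{d}\mu=\int(g-g_R)\,\mathrm{d}\mu_N+\big(\int g_R\,\mathrm{d}\mu_N-\int g_R\,\mathrm{d}\mu\big)+\int(g_R-g)\,\mathrm{d}\mu$ then finishes the proof.

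The step I expect to be the main obstacle is the construction of $\Phi$, i.e.\ ensuring that the quantities are finite in the first place. If $C\phi$ is itself $\mu$-integrable — which is automatic for $d\ge3$, and holds already for $d\ge2$ when $\phi=\sqrt{\epsilon(k+\pi)/\epsilon(k)}$, since near $k=0$ the only singularity $1/\epsilon(k)$ is of order $\|k\|^{-2}$ — then $\Phi=C\phi$ works directly. Otherwise, by the finiteness criterion recorded after \eqref{eq:Idc_def} (the coefficients of $c(N)$ summing to $2c_\infty$) one must have $c_\infty=0$; then $\psi$ loses its $\tilde k$-dependence and, using $\sum_{\ell=0}^nc_\ell=0$ together with $|\cos(\ell k_j)-1|\le\frac12\ell^2k_j^2$, the trivial bound $\psi\le C$ improves to $\psi(k)\le\frac{1}{2d}\|k\|^2\sum_\ell\ell^2|c_\ell|$, so that $\Phi(k)=\phi(k)\min\{C,\frac{1}{2d}\|k\|^2\sum_\ell\ell^2|c_\ell|\}$ is bounded near $k=0$, hence $\mu$-integrable, and still dominates $g$ uniformly in $N$. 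The cases in which both sides equal $+\infty$ require no argument, and a final routine check identifies $J^d_{c(N)}$ and $\tilde I^d_{c(N)}$ with $\int g\,\mathrm{d}\mu_N$ for the $\phi$, $\psi$ and $M_N$ above.
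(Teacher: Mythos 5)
Your proposal is correct in substance but proceeds by a genuinely different route than the paper. The paper's proof is an explicit two-scale computation: it substitutes $\tilde k\mapsto \tilde k/N$ in the fast cosine, decomposes $[0,2\pi N]^d$ into the boxes $[0,2\pi]^d+2\pi k$ with $k\in\{0,\ldots,N-1\}^d$, observes that the shift $\tilde k/N$ inside the slow factors disappears in the limit, and reads off the box sum as a Riemann sum for the double integral. You encode the same decoupling phenomenon as weak convergence of the pushforward measures $\mu_N$ to product Lebesgue measure (a Weyl-type equidistribution argument via characters), and then upgrade from continuous test functions to the singular integrand by truncation together with a dominating function $\Phi(k)$, using that the $k$-marginal of every $\mu_N$ is exactly normalized Lebesgue measure. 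What your route buys is a much more careful treatment of the singularity at $k=0$, which the paper dismisses in one sentence (and with a remark about the coefficients summing to zero that is itself muddled, since the statement defines $c_\infty=\sum_\ell c_\ell$ while the proof and the application $c(\eta,N)=(1-\eta,\eta,0,\ldots,0,-1)$ clearly intend $c_\infty=-\sum_\ell c_\ell$); what the paper's route buys is brevity and no measure-theoretic machinery. One cosmetic point: Stone--Weierstrass gives density of trigonometric polynomials among continuous \emph{periodic} functions, so what you obtain is weak convergence on the torus; this suffices here because all integrands are periodic and none of the measures charge the boundary of the cube, but it is worth saying.

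There is a small hole in your low-dimensional case analysis, i.e.\ precisely where $C\phi$ fails to be integrable ($\tilde I$ with $d\le 2$, $J$ with $d=1$). Your dichotomy ``either $c_\infty=0$ or both sides are $+\infty$'' is not exhaustive: if the total coefficient sum of $c(N)$ is strictly negative, the positive part removes the singularity, so all quantities are finite, yet $C\phi$ is still non-integrable and your $\Phi$ is not yet constructed. The fix uses exactly the tool you already introduced: for $c_\infty<0$ one has, uniformly in $\tilde k$ and $N$, $\frac{c_\infty}{d}\sum_j\cos\tilde k_j\le -c_\infty$ and hence $\psi(k,\tilde k)\le \frac{1}{2d}\big(\sum_\ell \ell^2|c_\ell|\big)\,\|k\|^2$ (with $\|k\|$ the torus distance to $0$), so $\Phi(k)=\phi(k)\min\{C,\frac{1}{2d}\sum_\ell\ell^2|c_\ell|\,\|k\|^2\}$ is integrable and dominates $g$. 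Moreover, under the sign convention the paper actually uses ($c_\infty=-\sum_\ell c_\ell$, total sum zero), in low dimension with $\sum_\ell c_\ell>0$ the prelimit integrals are finite for every $N$ while the limiting double integral is $+\infty$; your scheme still delivers the claim, but only after the one-line addition $\liminf_N\int g\,\mathrm{d}\mu_N\ge \int g_R\,\mathrm{d}\mu\uparrow+\infty$ as $R\to\infty$. With these two sentences added your proof is complete; in the regime where the lemma is used quantitatively ($d\ge3$) it is already complete and, if anything, more rigorous than the argument in the paper.
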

\begin{proof}
We have 
\begin{align*}
J^d_{c(N)}
=&  \int_{[0,2\pi]^d}\frac{\text{d}^d\tilde k}{(2\pi)^d}\, \sqrt{\frac{\epsilon(\tilde k+\pi)}{\epsilon(\tilde k)}} \bigg( \sum_{j=1}^d \Big( \sum_{\ell=0}^{n} \frac{c_\ell}{d} \cos( \ell \tilde k_j) + \frac{c_\infty}{d} \cos(N \tilde k_j) \Big) \bigg)_+
\\=&  \int_{[0,2\pi N]^d}\frac{\text{d}^d\tilde k}{(2\pi N)^d}\, \sqrt{\frac{\epsilon\big(\frac{\tilde k}{N}+\pi\big)}{\epsilon\big(\frac{\tilde k}{N}\big)}} 
\bigg( \sum_{j=1}^d\Big( \sum_{\ell=0}^{n} \frac{c_\ell}{d} \cos\Big(\ell \frac{\tilde k_j}{N}\Big) + \frac{c_\infty}{d} \cos(\tilde k_j) \Big) \bigg)_+
%
%
\\=& \sum_{ k\in\{0,\ldots,N-1\}^d}
\int_{[0,2\pi]^d+2\pi k}\frac{\text{d}^d\tilde k}{(2\pi N)^d}\,
\sqrt{\frac{\epsilon\big(\frac{\tilde k}{N}+\pi\big)}{\epsilon\big(\frac{\tilde k}{N}\big)}}
\bigg( \sum_{j=1}^d \Big( \sum_{\ell=0}^{n}\frac{c_\ell}{d} \cos\Big(\ell \frac{\tilde k_j}{N}\Big) +  \frac{c_\infty}{d} \cos(\tilde k_j) \Big) \bigg)_+
\\=&  \sum_{ k\in\{0,\ldots,N-1\}^d} \int_{[0,2\pi]^d} \frac{\text{d}^d\tilde k}{(2\pi N)^d} \, \sqrt{\frac{\epsilon\big(\frac{\tilde k+2\pi k}{N}+\pi\big)}{\epsilon\big(\frac{\tilde k+2\pi k}{N}\big)}}
\bigg( \sum_{j=1}^d \Big( \sum_{\ell=0}^{n}\frac{c_\ell}{d} \cos\Big(\ell\frac{\tilde k_j+2\pi k_j}{N}\Big) +  \frac{c_\infty}{d} \cos(\tilde k_j) \Big) \bigg)_+
\,.
\end{align*}
For $N\rightarrow\infty$, the term $\tilde k/N$ vanishes as $\tilde k\in[0,2\pi]^d$ is bounded. It should be kept in mind that the integrand does not exhibit any singularity owing to $\sum_{\ell=0}^{n}c_\ell + c_\infty=0$. The result follows from a Riemann integral as
\begin{align}\label{eq:J_lim}
J^d_{c(N)}
=& \,\, \frac{1}{N^d} \sum_{ k\in\{0,\ldots,N-1\}^d} \int_{[0,2\pi]^d} \frac{\text{d}^d\tilde k}{(2\pi)^d} \, \sqrt{\frac{\epsilon\big(\frac{2\pi k}{N}+\pi\big)}{\epsilon\big(\frac{2\pi k}{N}\big)}}
\bigg( \sum_{j=1}^d \Big( \sum_{\ell=0}^{n}\frac{c_\ell}{d} \cos\Big(\ell\frac{2\pi k_j}{N}\Big) +  \frac{c_\infty}{d} \cos(\tilde k_j) \Big) \bigg)_+
\notag\\ \stackrel{N \to \infty}{\longrightarrow}& \int_{[0,2\pi]^{2d}}\frac{\text{d}^d k}{(2\pi)^{d}} \, \frac{\text{d}^d\tilde k}{(2\pi)^{d}}\,\sqrt{\frac{\epsilon( k+\pi)}{\epsilon( k)}}
\bigg( \sum_{j=1}^d \Big(\sum_{\ell=0}^{n}\frac{c_\ell}{d} \cos(\ell k_j) + \frac{c_\infty}{d} \cos(\tilde  k_j) \Big) \bigg)_+
\,.
\end{align}
In the same manner, we find that
\begin{align}\label{eq:tilde I_lim}
\lim_{N\rightarrow\infty} \tilde I^d_{c(N)}
&= \int_{[0,2\pi]^{2d}}\frac{\text{d}^dk}{(2\pi)^{d}} \, \frac{\text{d}^d\tilde k}{(2\pi)^{d}}\,\frac{1}{\epsilon(k)}
\bigg( \sum_{j=1}^d \Big(\sum_{\ell=0}^{n}\frac{c_\ell}{d} \cos(\ell k_j) + \frac{c_\infty}{d} \cos(\tilde k_j) \Big) \bigg)_+ \,.
\end{align}
\end{proof}

In this limit, Proposition \ref{l:I_max_m_even} serves to simplify
\[
    I^{u,d}_{c} = \sqrt{1 - u} \, J^d_c,
\]
with $J^d_c$ given in \eqref{eq:Jdc_def}.

\begin{prop}
Let $d\geq1$. For $\eta \geq 0$ set $c(\eta) = (1-\eta,\eta)$. Then 
\begin{align}\label{eq:m->inf}
\liminf_{N\rightarrow\infty} \P_{\beta,\theta,u}\big(E_{0,2 Ne,0}\big) \geq \sup_{\eta
\geq 0} \bigg( 1 - \eta \, h_{{P_{\beta,\theta,u,d}}}\Big( \frac{\colnmb \sqrt{1-u}}{2\sqrt{2}\eta} J^d_{c(\eta),\infty} \Big)
\notag - \frac{\colnmb}{2\beta} \tilde I^d_{c(\eta),\infty} \bigg) \,.
\end{align}
\end{prop}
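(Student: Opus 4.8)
The strategy is to specialise Proposition~\ref{c:m>=2_bound} to $m = 2N$, rewrite the resulting bound via Proposition~\ref{l:I_max_m_even}, and then send $N \to \infty$ using Lemma~\ref{lem:limit}. Throughout I keep $\beta$ fixed (large enough that $P_{\beta,\theta,u,d}$ is defined) and let only the loop length $2N$ grow. The case $d \leq 2$ is vacuous: for every $\eta \geq 0$ the limiting integrand defining $\tilde I^d_{c(\eta),\infty}$ behaves like $1/\epsilon(k)$ near $k = 0$ against a nonnegative, not-identically-zero $\tilde k$-factor (the $\eta$-dependent part cancels there), so $\tilde I^d_{c(\eta),\infty} = +\infty$, the right-hand side of the claim is $-\infty$, and there is nothing to prove. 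So assume $d \geq 3$.

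First I would fix $N$ and apply Proposition~\ref{c:m>=2_bound} with $m = 2N$. The vector $c(\eta) = (1-\eta,\eta,0,\dots,0,-1) \in \R^{2N+1}$ occurring there has nonzero entries only at indices $0$, $1$ and $2N$; in particular $c_1 = \eta \geq 0$ and $c_\ell = 0$ for every odd $\ell \geq 3$, so Proposition~\ref{l:I_max_m_even} applies and yields $I^{u,d}_{c(\eta)} = \sqrt{1-u}\,J^d_{c(\eta)}$. Substituting this into Proposition~\ref{c:m>=2_bound} gives, for every $N$,
\[
\P_{\beta,\theta,u}\big(E_{0,2Ne,0}\big) \;\geq\; \sup_{\eta \geq 0}\bigg( 1 - \eta\, h_{P_{\beta,\theta,u,d}}\Big( \tfrac{\colnmb\sqrt{1-u}}{2\sqrt 2\,\eta}\, J^d_{c(\eta)} \Big) - \tfrac{\colnmb}{2\beta}\,\tilde I^d_{c(\eta)} \bigg),
\]
where $c(\eta)$ on the right still denotes the $(2N+1)$-vector, so $J^d_{c(\eta)}$ and $\tilde I^d_{c(\eta)}$ depend on $N$.

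Next I would fix $\eta \geq 0$ and pass to the limit $N \to \infty$ inside the summand. Lemma~\ref{lem:limit}, applied with the base vector $(1-\eta,\eta)$ (so $n = 1$ and $c_\infty = 1$), gives $J^d_{c(\eta)} \to J^d_{c(\eta),\infty}$ and $\tilde I^d_{c(\eta)} \to \tilde I^d_{c(\eta),\infty}$. Although the terminal coefficient of $c(\eta)$ here is $-1$ whereas Lemma~\ref{lem:limit} appends $+c_\infty = +1$, this does not affect the limit: in the limiting $2d$-dimensional integrals the terminal coefficient only produces the fast term $\pm\tfrac{c_\infty}{d}\cos(\tilde k_j)$, whose sign is reversed by the substitution $\tilde k \mapsto \tilde k + \pi$, under which the remaining factors $\sqrt{\epsilon(k+\pi)/\epsilon(k)}$, $1/\epsilon(k)$ and $\mathrm d^d\tilde k$ are invariant. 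Since $h_p$ is continuous on $\R_{\geq 0}$ — the three branches of \eqref{eq:h_fct} agree at $x = \sqrt p$ and at $x = 1$ — for each fixed $\eta > 0$ the summand is a continuous function of $(J^d_{c(\eta)},\tilde I^d_{c(\eta)})$, hence converges to $1 - \eta\, h_{P_{\beta,\theta,u,d}}\big(\tfrac{\colnmb\sqrt{1-u}}{2\sqrt 2\,\eta} J^d_{c(\eta),\infty}\big) - \tfrac{\colnmb}{2\beta}\tilde I^d_{c(\eta),\infty}$; the value at $\eta = 0$ is read off as the limit $\eta \downarrow 0$ (using $h_p(x) = 2x$ for large $x$).

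Finally I would interchange the limit with the supremum. For every fixed $\eta_0 \geq 0$ the displayed bound gives $\P_{\beta,\theta,u}(E_{0,2Ne,0}) \geq 1 - \eta_0\, h_{P_{\beta,\theta,u,d}}\big(\tfrac{\colnmb\sqrt{1-u}}{2\sqrt 2\,\eta_0} J^d_{c(\eta_0)}\big) - \tfrac{\colnmb}{2\beta}\tilde I^d_{c(\eta_0)}$ for all $N$; letting $N \to \infty$ and using the previous step,
\[
\liminf_{N\to\infty}\P_{\beta,\theta,u}\big(E_{0,2Ne,0}\big) \;\geq\; 1 - \eta_0\, h_{P_{\beta,\theta,u,d}}\Big( \tfrac{\colnmb\sqrt{1-u}}{2\sqrt 2\,\eta_0}\, J^d_{c(\eta_0),\infty} \Big) - \tfrac{\colnmb}{2\beta}\,\tilde I^d_{c(\eta_0),\infty}.
\]
Taking the supremum over $\eta_0 \geq 0$ on the right gives the claim. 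I do not expect a genuine obstacle here: the proof is bookkeeping built on Propositions~\ref{c:m>=2_bound} and~\ref{l:I_max_m_even} together with Lemma~\ref{lem:limit}. The only points that need care are the sign/index matching between the $(2N+1)$-vectors of Proposition~\ref{c:m>=2_bound} and the truncations in Lemma~\ref{lem:limit} (resolved by the $\tilde k \mapsto \tilde k + \pi$ symmetry), and the elementary inequality $\liminf_N \sup_\eta (\cdot) \geq \sup_\eta \liminf_N (\cdot)$ that licenses the final interchange.
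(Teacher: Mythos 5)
Your proof follows the same route as the paper's: specialize Proposition~\ref{c:m>=2_bound} to $m=2N$, replace $I^{u,d}_{c(\eta,N)}$ by $\sqrt{1-u}\,J^d_{c(\eta,N)}$ via Proposition~\ref{l:I_max_m_even}, pass to the limit using Lemma~\ref{lem:limit} and the continuity of $h_{P_{\beta,\theta,u,d}}$, and conclude with $\liminf_N\sup_\eta(\cdot)\geq\sup_\eta\liminf_N(\cdot)$. Your additional remarks --- the terminal-coefficient sign (which traces back to a sign slip in the statement of Lemma~\ref{lem:limit}, whose proof in fact uses $c_\infty=-\sum_\ell c_\ell$, matching the $-1$ in the application) and the vacuous case $d\leq 2$ --- are sound and merely make explicit points the paper leaves implicit.
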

\begin{proof}
Let $\eta \geq 0$ and set $c(\eta, N) = (1-\eta, \eta, 0, 
\ldots, 0, -1) \in \R^{2N+1}$. Then Proposition 
\ref{c:m>=2_bound} applies, and Proposition \ref{l:I_max_m_even} allows to replace the quantity 
$I^{u,d}_{c(\eta,N)}$ in formula \eqref{eq:m>=2_bound}
by $J^d_{c(\eta,N)}$. Lemma \ref{lem:limit} tells us 
that $\lim_{N \to \infty} J^d_{c(\eta,N)}$ exists and 
provides an expression for it. Continuity of the function
$h_{P_{\beta,\theta,u,d}}$ allows us to take the limit inside. This proves the claim. 
\end{proof}

Numerical values for $m\to\infty$ and $\beta\to\infty$ can be found in Tab. \ref{tab:p_bounds_m>=2}.\\

\begin{table} [!ht]
\centering
{\renewcommand{\arraystretch}{1.2}
\begin{tabular}{ccccccc}
\hline\hline
$d$ & $m=2$ & $m=3$ & $m=4$ & $m=5$ & $m\rightarrow\infty$
\\ \hline
$1$ & $0.047$ & $0$ & $0$ & $0$ & $0$
\\ \hline
$2$ & $0.203$ & $0.142$ & $0.113$ & $0.096$ & $0$
\\ \hline 
$3$ & $0.273$ & $0.251$ & $0.245$ & $0.242$ & $0.237$
\\ \hline
$4$ & $0.306$ & $0.297$ & $0.294$ & $0.294$ & $0.294$
\\ \hline
$5$ & $0.325$ & $0.320$ & $0.319$ & $0.319$ & $0.319$
\\\hline\hline
\end{tabular}
}

\caption{Lower bound of $\P_{\beta,\theta,u}\big(E_{0,me,0}\big)$ for $\beta\to\infty$, $\colnmb=2$, and $u=1/2$ as well as various dimensions $d$ and values of $m\in\mathbb{N}$ including the limit $m\rightarrow\infty$.}
\label{tab:p_bounds_m>=2}
\end{table}

Next, we investigate the range of parameters $\theta$, $u$ and $d$ for 
which long range order can be proved with our method, and compare with 
the results of Ueltschi \cite{Ueltschi_2013}. Equation (5.5) of this
work, stated in our notation, says that long range order holds if 
\begin{equation} \label{e:ueltschi_bound}
    \gamma := \theta \sqrt{1-u} < \sqrt{\frac{2}{J^{d}_{(1,0)} J^{d}_{(0,1)}}}
\end{equation}

To see the connection with the bound we obtain from Theorem \ref{t:main}, we introduce
the function 
\[
b_\gamma(\eta,p) \coloneqq 1-\eta + p \eta - \gamma\sqrt{\frac{p}{2}} J^{d}_{(1-\eta,\eta)}.
\]
Then Theorem \ref{t:main} with the choice $c = (1-\eta, \eta)$, together with 
Proposition \ref{l:I_max_m_even} and Proposition \ref{c:m=1_bound}
gives 
\[
\limsup_{\beta \to \infty} \lim_{|\Lambda|\rightarrow\infty} \bigg( \frac{1}{|\Lambda|} 
\sum_{x\in\Lambda} \P_{\beta,u,\theta}(E_{0,x,0}) \bigg) \geq b_{\rm new}(\gamma) := 
\inf_{p\in[P_{\theta,u,d},1]} \sup_{\eta \in[0,1]} b_\gamma(\eta,p),
\]
and so the parameter $\gamma$ needs to be chosen small enough for the right hand side of 
this to be positive. In comparison, the inequality \eqref{e:ueltschi_bound} comes from the requirement that the right hand side of  
\[
\limsup_{\beta \to \infty} \lim_{|\Lambda|\rightarrow\infty} \bigg( \frac{1}{|\Lambda|} 
\sum_{x\in\Lambda} \P_{\beta,u,\theta}(E_{0,x,0}) \bigg) \geq b_{\text{Ueltschi}}(\gamma)
:= \inf_{p\in[0,1]} \sup_{\eta \in \{ 0,1 \}} b_\gamma(\eta,p)
\]
needs to be positive. Clearly, $b_{\text{new}}(\gamma) \geq b_{\text{Ueltschi}}(\gamma)$ 
for all $\gamma > 0$. While a strict inequality cannot be inferred from the above 
formulae, numerical computations show that it holds. We are thus able to improve the 
range of values $u \in [0,1/2]$ and $\theta \in \N$, $\theta \geq 2$ (in terms of 
$\gamma$) for which long range order can be shown. Some values for  
\[
\gamma^d_{\text{Ueltschi}} = \sup \{ \gamma > 0: b_{\text{Ueltschi}}(\gamma) > 0\}
\qquad \text{and} \quad 
\gamma^d_{\text{new}} = \sup \{ \gamma > 0: b_{\text{new}}(\gamma) > 0\}
\]
are given in Table \ref{tab:gamma_bounds}.

Finally, it is interesting to see how the results obtained via reflection positivity compare to numerical computations of the critical value 
\[
\beta_{\rm crit} = \inf \left\{ \beta > 0: \lim_{|\Lambda|\rightarrow\infty} \bigg( \frac{1}{|\Lambda|} \sum_{x\in\Lambda}  \P_{\beta,u,\theta}(E_{0,x,0}) \bigg) >0 \right\}
\]
for $\beta$ with respect to the appearance of long range order. As above, Theorem \ref{t:main} yields 
\begin{align}
\lim_{|\Lambda|\rightarrow\infty} \bigg( \frac{1}{|\Lambda|} \sum_{x\in\Lambda}  \P_{\beta,u,\theta}(E_{0,x,0}) \bigg) &\geq 
1 - \eta \, h_{0}\Big( \frac{\gamma}{2\sqrt{2}\eta} J^d_{(1-\eta,\eta)} \Big) - \frac{\colnmb}{2\beta} \tilde I^d_{(1-\eta,\eta)},
\end{align}
and thus guarantees long range order whenever $\beta > \tilde \beta_{\rm crit}$, with 
\begin{align*}
    \tilde\beta_\text{crit} \coloneqq \sup_{\eta\in[0,1]} \frac{\frac{\colnmb}{2}\tilde I^d_{(1-\eta,\eta)}}{1 - \eta \, h_{0}\Big( \frac{\gamma}{2\sqrt{2}\eta} J^d_{(1-\eta,\eta)} \Big)} \,.
\end{align*}
 In the case of $\theta=2$ and $d=3$, numerical computations of $\beta_c$ have been carried out \cite{barp2015numerical}, with the results that $\beta_\text{crit}=0.346$ for the case $u=0$, and  $\beta_\text{crit}=0.313$ for the case $u = 1/2$. This compares with  
$\tilde\beta_\text{crit}=1.42$ when $u=0$ and $\tilde\beta_\text{crit}=0.52$ when 
$u=1/2$. The best results are obtained for $u=1/2$, where the two numbers differ only by
a factor of 1.66.
\begin{table} \label{table1}
\centering
{\renewcommand{\arraystretch}{1.2}
\begin{tabular}{cccccc}
\hline\hline
$d$ & $3$ & $4$ & $5$ & $6$ & $7$
\\ \hline
$\gamma^d_\text{Ueltschi}$ & $2.22$ & $2.68$ & $3.00$ & $3.26$ & $3.48$
\\ \hline
$\gamma^d_\text{new}$ & $2.47$ & $3.04$ & $3.46$ & $3.80$ & $4.08$
\\\hline\hline
\end{tabular}
}
\caption{Upper bounds $\gamma^d_\text{Ueltschi}$ and $\gamma^d_\text{new}$ of $\gamma\coloneqq\colnmb\sqrt{1-u}$ provided by reference \cite{Ueltschi_2013} and this article, respectively, where long range order is proven for various dimensions $d$.}
\label{tab:gamma_bounds}
\end{table}

\noindent\textbf{Funding acknowledgement.}  AK's research is funded by the Cusanuswerk.

\appendix

\section{Appendix}

We prove that $\hat\kappa(k,0)\geq0$ for all $k\in\Lambda^*$ based on Eq. (3.24) in \cite{Ueltschi_2013}. Since it involves a physical notation, we first summarize some basics.

For $\theta\in\N$, we set the spin $S\in\frac{1}{2}\N$ such that $\theta=2S+1$. The Hilbert space reads as
\begin{align*}
    \mathcal{H} = \bigotimes_{x\in\Lambda}\mathcal{H}_x\,,
\end{align*}
where each $\mathcal{H}_x$ is a copy of $\mathbb{C}^{2S+1}$ and is spanned by $(|s\rangle)_{s\in \{-S,-S+1,\ldots,S\}}$. The spin operator $S^3_x$ acts on $\mathcal{H}_x$ as $S^3_x|s\rangle_x = s|s\rangle_x$. 
To construct the random loop model, we associate crosses and double bars with operators $T_{x,y},Q_{x,y}$ for $\{x,y\}\in\Ecal_L$ as
\begin{align}\label{eq:T_op_def}
    T_{x,y}|a,b\rangle \coloneqq |b,a\rangle, \quad
Q_{x,y}|a,b\rangle \coloneqq \textbf{1}_{a=b}\sum_{c=-S}^S|c,c\rangle
\end{align}
on $\mathcal{H}_x\otimes\mathcal{H}_y$. This serves to introduce the family of Hamiltonians
\begin{align}\label{eq:H}
    H_u\coloneqq -\sum_{\{x,y\}\in\mathcal{E}} \left( u T_{x,y} + (1-u)Q_{x,y} -1 \right)
\end{align}
for $u\in[0,1]$. Next, we let $A,B:\mathcal{H}\rightarrow\mathcal{H}$ be hermitian and define the partition function
$
    Z\coloneqq \text{Tr}\left[ e^{-\beta H_u} \right]
$
as well as the expectation
\begin{align*}
    \langle A\rangle \coloneqq \frac{1}{Z} \text{Tr}\left[ A\, e^{-\beta H_u} \right] \,,
\end{align*}
the Schwinger function
\begin{align*}
    \langle A;B\rangle(t) \coloneqq  \frac{1}{Z}\,\text{Tr}\left[ A^\dagger e^{-t H_u} B e^{-(\beta-t) H_u} \right] \,,
\end{align*}
and the Duhamel two-point function
\begin{align*}
    (A,B)_\text{Duh}\coloneqq \int_0^\beta\text{d}t\, \langle A;B \rangle(t) \,,
\end{align*}
which is a scalar product. Theorem 3.3 in \cite{Ueltschi_2013} states that
\begin{align*}
    \langle S^3_{x};S^3_{x'}\rangle(t) = \frac{1}{3} S(S+1) \P_{\beta,u,\theta}(E_{x,x',t})
\end{align*}
Translation symmetry in $\Lambda$ gives $\P_{\beta,u,\theta}(E_{x,x',t})=\P_{\beta,u,\theta}(E_{0,x'-x,t})$ and thus simplifies
\begin{align*}
    \langle S^3_{x};S^3_{x'}\rangle(t) = \frac{1}{3} S(S+1) \kappa(x'-x,t)
\end{align*}
along with the special case
\begin{align}\label{eq:mean_kappa}
    \langle S^3_{x} S^3_{x'}\rangle = \langle S^3_{x};S^3_{x'}\rangle(0) =  \frac{1}{3} S(S+1) \kappa(x'-x,0)
\end{align}

Finally, we return to the proof of $\kappa(k,0)\geq0$. The first inequality in Eq. (3.24) in [4] claims
\begin{align*}
\frac{1}{\beta} (A,A)_\text{Duh} \leq \frac{1}{2} \langle A^\dagger A + A A^\dagger \rangle \,.
\end{align*}
Moreover, the Duhamel two-point function fulfills $(A,A)_\text{Duh}\geq0$ as a scalar product, which gives
\begin{align*}
    \frac{1}{2} \langle A^\dagger A + A A^\dagger \rangle\geq 0 \,.
\end{align*}
To conclude the proof, we let $k\in\Lambda^*$ and set $A\coloneqq \sum_{x\in\Lambda} e^{-ik\cdot x} S^3_x$ to give
\begin{align*}
\frac{1}{2}\langle A^\dagger A + A A^\dagger \rangle &= \frac{1}{2} \sum_{x,x'\in\Lambda} \langle S^3_x S^3_{x'}\rangle \left( e^{-ik\cdot x}e^{ik\cdot x'} + e^{ik\cdot x}e^{-ik\cdot x'} \right)
\notag\\&= \frac{S(S+1)}{3} \sum_{x,x'\in\Lambda} \kappa(x'-x,0) \cos \big(k\cdot (x'-x)\big)
\notag\\&= |\Lambda| \frac{S(S+1)}{3} \sum_{x\in\Lambda} \kappa(x,0) \cos (k\cdot x)
\notag\\&= |\Lambda| \frac{S(S+1)}{3} \hat\kappa(k,0)
\end{align*}
using Eq. \eqref{eq:mean_kappa} and translation invariance, which yields $\hat\kappa(k,0) \geq0$.

\bibliographystyle{abbrv}
\bibliography{bib}

\end{document}